\documentclass[pra,twocolumn,floatfix,superscriptaddress,longbibliography, nofootinbib]{revtex4-1}

\usepackage[utf8]{inputenc}
\usepackage[left]{lineno}
\usepackage[english]{babel}
\usepackage[colorlinks=true,citecolor=teal,linkcolor=orange,urlcolor=orange]{hyperref}

\usepackage{
    algorithm,
    algpseudocode,
    amsmath,
    amssymb,
    amsthm,
    array,
    bbm,
    csquotes,
    dsfont,
    graphicx,
    hyperref,
    lipsum,
    mathtools,
    multirow,
    nicefrac,
    qcircuit,
    slashed,
    soul,
    times,
    xcolor,
    xspace
}

\usepackage{chngcntr}

\theoremstyle{definition}

\newtheorem{theorem}{Theorem}
\newtheorem{definition}{Definition}

\newtheorem{observation}{Observation}

\newenvironment{remark}[1][Remark]{\begin{trivlist}
\item[\hskip \labelsep {\bfseries #1}]}{\end{trivlist}}

\newcommand{\id}{\mathbb{I}}

\newcommand{\stkout}[1]{\ifmmode\text{\st{\ensuremath{#1}}}\else\st{#1}\fi}
\newif\ifverbose




\graphicspath{ {images/} }

\newcommand*{\conj}{\ensuremath{^{*}}}
\newcommand*{\dagg}{\ensuremath{^{\dagger}}}

\newcommand{\ket}[1]{|#1\rangle}
\newcommand{\bra}[1]{\langle #1|}

\DeclareMathOperator{\Tr}{tr}

\DeclareMathOperator{\sign}{sign}

\DeclareMathOperator*{\argmin}{arg\,min}
\DeclareMathOperator{\gen}{gen}

\DeclareMathOperator{\poly}{poly}

\DeclareMathOperator{\SWAP}{SWAP}



\providecommand{\hR}{\ensuremath{\hat{R}}}

\providecommand{\hW}{\ensuremath{\hat{W}}}

\providecommand{\hp}{\ensuremath{\hat{p}}}

\providecommand{\hs}{\ensuremath{\hat{s}}}

\providecommand{\hy}{\ensuremath{\hat{y}}}
\providecommand{\hz}{\ensuremath{\hat{z}}}

\providecommand{\hrho}{\ensuremath{\hat{\rho}}}


\providecommand{\hcalD}{\ensuremath{\hat{\mathcal{D}}}}

\providecommand{\hcalM}{\ensuremath{\hat{\mathcal{M}}}}


\providecommand{\to}{\ensuremath{\Tilde{o}}}


\providecommand{\tcalM}{\ensuremath{\Tilde{\mathcal{M}}}}


\providecommand{\calC}{\ensuremath{\mathcal{C}}}
\providecommand{\calD}{\ensuremath{\mathcal{D}}}

\providecommand{\calF}{\ensuremath{\mathcal{F}}}

\providecommand{\calM}{\ensuremath{\mathcal{M}}}
\providecommand{\calN}{\ensuremath{\mathcal{N}}}
\providecommand{\calO}{\ensuremath{\mathcal{O}}}

\providecommand{\calX}{\ensuremath{\mathcal{X}}}
\providecommand{\calY}{\ensuremath{\mathcal{Y}}}


\providecommand{\bbI}{\ensuremath{\mathbb{I}}}

\providecommand{\bbN}{\ensuremath{\mathbb{N}}}

\providecommand{\bbP}{\ensuremath{\mathbb{P}}}

\providecommand{\bbR}{\ensuremath{\mathbb{R}}}


\begin{document}

\title{Understanding quantum machine learning also requires rethinking generalization}

\newcommand{\FU}{Dahlem Center for Complex Quantum Systems,
Freie Universität Berlin, 14195 Berlin, Germany}
\newcommand{\hzb}{Helmholtz-Zentrum Berlin f{\"u}r Materialien und Energie, 14109 Berlin, Germany}
\newcommand{\hhi}{Fraunhofer Heinrich Hertz Institute, 10587 Berlin, Germany}

\author{Elies Gil-Fuster}
\affiliation{\FU}
\affiliation{\hhi}

\author{Jens Eisert}
\affiliation{\FU}
\affiliation{\hhi}
\affiliation{\hzb}

\author{Carlos Bravo-Prieto}
\email{c.bravo.prieto@fu-berlin.de}
\affiliation{\FU}

\begin{abstract}
Quantum machine learning models have shown successful generalization performance even when trained with few data.
In this work, through systematic randomization experiments, we show that traditional approaches to understanding generalization fail to explain the behavior of such quantum models.
Our experiments reveal that state-of-the-art quantum neural networks accurately fit random states and random labeling of training data.
This ability to memorize random data defies current notions of small generalization error, problematizing approaches that build on complexity measures such as the VC dimension, the Rademacher complexity, and all their uniform relatives.
We complement our empirical results with a theoretical construction showing that quantum neural networks can fit arbitrary labels to quantum states, hinting at their memorization ability.
Our results do not preclude the possibility of good generalization with few training data but rather rule out any possible guarantees based only on the properties of the model family.
These findings expose a fundamental challenge in the conventional understanding of generalization in quantum machine learning and highlight the need for a paradigm shift in the study of quantum models for machine learning tasks.
\end{abstract}

\maketitle

\section{Introduction}\label{sec:introduction}

    \begin{figure*}
        \centering
        \includegraphics{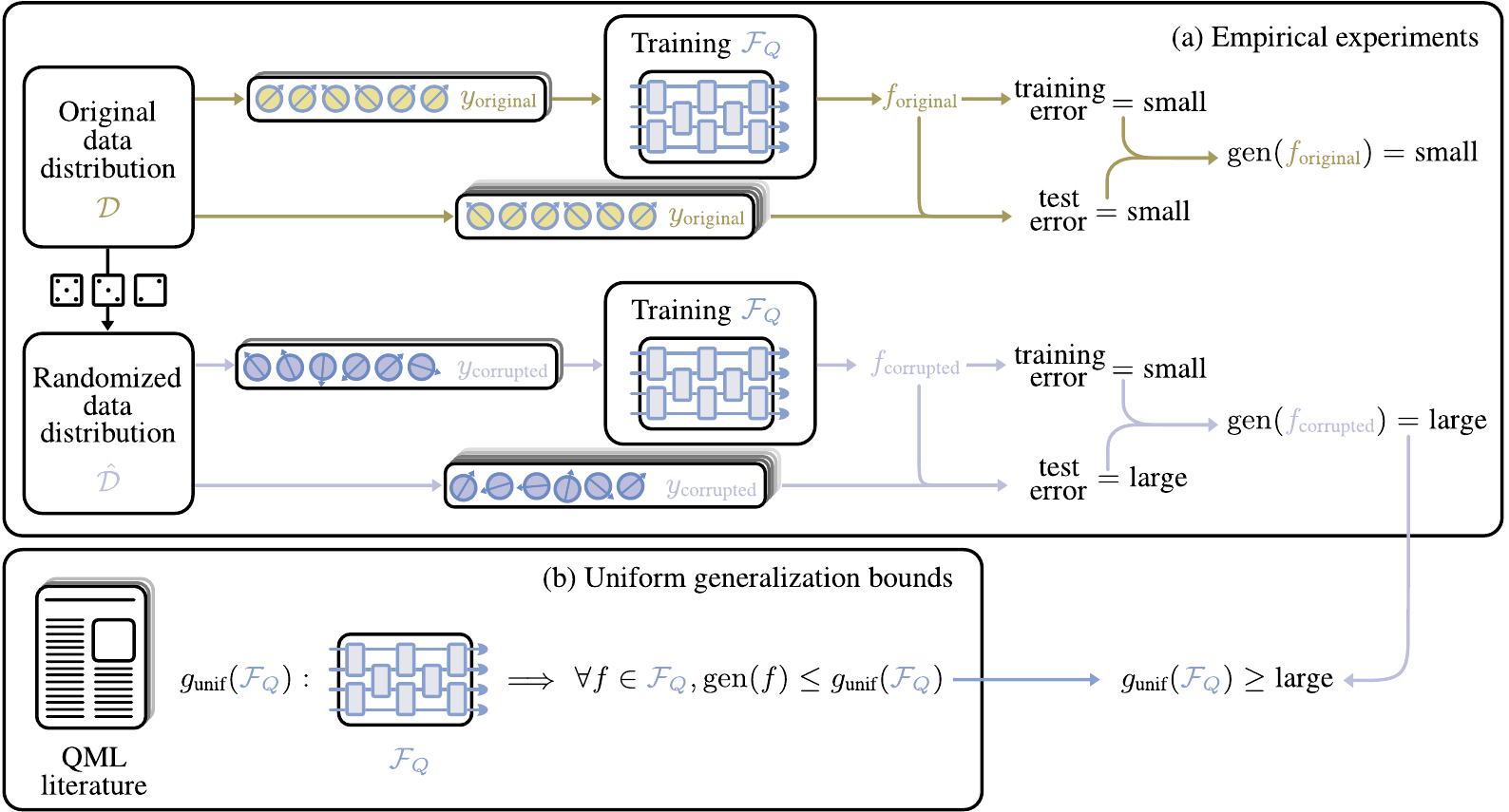}
        \caption{
            \textbf{Visualization of our framework.}
            (a) In the empirical experiments, 
            a distribution of labeled quantum data $\calD$ undergoes a randomization process, leading to a corrupted data distribution $\hcalD$.
            The training and a test set are drawn independently from each distribution.
            Then, the training sets are fed into an optimization algorithm, which is employed to identify the best fit for each data set individually from a family of parameterized quantum circuits $\calF_Q$. This process generates two hypotheses: one for the original data $f_\text{original}$ and another for the corrupted data $f_\text{corrupted}$.
            We empirically find that the labeling functions can perfectly fit the training data, leading to small training errors.
            In parallel, $f_\text{original}$ achieves a small test error, indicating good learning performance, and quantified by a small generalization gap $\gen(f_\text{original}) = \text{small}$.
            On the contrary, the randomization process causes $f_\text{corrupted}$ to achieve a large test error, which in turn results in a large generalization gap $\gen(f_\text{corrupted}) = \text{large}$.
            (b) Regarding uniform generalization bounds,
            it is worth noting that this corner of QML literature assigns the same upper bound $g_\text{unif}$ to the entire function family without considering the specific characteristics of each individual function.
            Finally, we combine two significant findings: (1) We have identified a hypothesis with a large empirical generalization gap, and (2) the uniform generalization bounds impose identical upper bounds on all hypotheses. Consequently, we conclude that any uniform generalization bound derived from the literature must be regarded as \enquote{large}, indicating that all such bounds are loose for that training data size.
            The notion of loose generalization bound does not exclude the possibility of achieving good generalization; rather, it fails to explain or predict such successful behavior.}
        \label{fig:page1}
    \end{figure*}

Quantum devices promise applications in solving computational problems beyond the capabilities of 
classical computers~\cite{Shor-1994,AshleyOverview,GoogleSupremacy,PhysRevLett.127.180501,SupremacyReview}. 
Given the paramount importance of machine learning in
a wide variety of algorithmic applications that make predictions based on training data, 
it is a natural thought to investigate to what extent quantum computers may assist in tackling machine learning tasks. Indeed, such tasks are commonly listed among the most
promising candidate applications for near-term quantum devices~\cite{biamonte2017quantum,dunjko2018machine,schuld2021machine,RevModPhys.91.045002}. To date, 
within this emergent field of quantum machine learning (QML) 
a body of literature is available that heuristically explores the potential of improving 
learning algorithms by having access to quantum devices~\cite{schuld2017implementing, havlivcek2019supervised, schuld2019quantum, benedetti2019generative, zhu2019training, perez2020data, coyle2020born, lloyd2020quantum, hubregtsen2022training, rudolph2022generation, bravo2022style}. Among the models considered,
parameterized quantum circuits (PQCs), also known as quantum neural networks (QNNs),
take center stage in those considerations~\cite{benedetti2019parametrized, Cerezo2021variational, bharti2022noisy}. For fine-tuned problems in quantum machine
learning, quantum advantages in computational complexity have been proven over
classical computers~\cite{PACLearning,TemmeML,PRXQuantum.2.010328,DensityModelling}, but to date,
such advantages rely on the availability of full-scale quantum computers, not being within reach for 
near-term architectures. While for PQCs such an advantage has not been shown yet, a growing body of literature is available that investigates their 
    expressivity~\cite{sim2019expressibility, bravo2020scaling, PhysRevResearch.3.L032049,IBMExpressivity, hubregtsen2021evaluation, haug2021capacity, holmes2022connecting},
        trainability~\cite{mcclean2018barren, cerezo2021cost, Barren, kim2021universal, wang2021noise, pesah2021absence, marrero2021entanglement, larocca2021theory, sharma2022trainability, rudolph2023trainability},
    and generalization~\cite{caro2020pseudodimension, abbas2020power, banchi2021generalization, bu2021effects, bu2021rademacher, bu2021statistical, du2021efficient, gyurik2021structural, Caro2021encodingdependent, caro2022generalization, caro2022outofdistribution, qian2022dilemma, du2022demystify, schatzki2022theoretical, peters2022generalization, haug2023generalization} -- basically aimed at 
    understanding what to expect from such quantum models. 
    Among those studies, the latter notions of
    generalization are particularly important since they are aimed at providing guarantees on the performance of QML models with unseen data after the training process.
    
    The importance of notions of generalization for PQCs is actually reflecting the development in 
    classical machine learning: 
    Vapnik's contributions~\cite{vapnik1971uniform} have laid the groundwork for the formal study of statistical learning systems.
    This methodology was considered standard in classical machine learning theory until roughly the last decade.
    However, the mindset put forth in this work has been disrupted by seminal work~\cite{zhang2016understanding} demonstrating that the conventional understanding of generalization is unable to explain the great success of large-scale deep convolutional neural networks. These networks, which display orders of magnitude more trainable parameters than the dimensions of the images they process, defied conventional wisdom concerning generalization.

    Employing clever randomization tests derived from non-parametric statistics~\cite{Edgington2007randomization}, the authors of Ref.~\cite{zhang2016understanding} exposed cracks in the foundations of Vapnik's theory and its successors~\cite{valiant1972learnable}, at least when applied to specific, state-of-the-art, large networks.
    Established complexity measures, such as the well-known VC dimension or Rademacher complexity~\cite{ShalevShwartz2014understanding}, among others, were inadequate in explaining the generalization behavior of large classical neural networks.
    Their findings, in the form of numerical experiments, directly challenge many of the well-established uniform generalization bounds for learning models, such as those derived in, e.g.,
    Refs.~\cite{vapnik1998statistical, bartlett2003rademacher,mukherjee2006learning}.
    Uniform generalization bounds apply uniformly to all hypotheses across an entire function family.
    Consequently, they fail to distinguish between hypotheses with good out-of-sample performance and those which completely overfit the training data.
    Moreover, uniform generalization bounds are oblivious to the difference between real-world data and randomly corrupted patterns.
    This inherent uniformity is what grants long reach to the randomization tests: exposing a single instance of poor generalization is sufficient to reduce the statements of mathematical theorems to mere trivially loose bounds.

    This state of affairs has important consequences for the emergent field of QML, as we explore here. Noteworthy, current generalization bounds in quantum machine learning models have essentially uniquely focused on uniform variants.
    Consequently, our present comprehension remains akin to the classical machine learning canon before the advent of Ref.~\cite{zhang2016understanding}. This observation raises a natural question as to whether the same randomization tests would yield analogous outcomes when applied to quantum models. In classical machine learning, it is widely acknowledged that the scale of deep neural networks plays a crucial role in generalization. Analogously, it is widely accepted that current QML models are considerably distant from that size scale. In this context, one would not anticipate similarities between current QML models and high-achieving classical learning models~\cite{qian2022dilemma, du2022demystify}.

    In this article, we provide
    empirical, long-reaching evidence of unexpected behavior in the field of generalization, with quite arresting conclusions. In fact, we are in the position to challenge notions of generalization, building on similar randomization tests that have been used in Ref.~\cite{zhang2016understanding}. As it turns out, they already yield surprising results when applied to near-term QML models employing quantum states as inputs. 
    Our empirical findings, also in the form of numerical experiments, reveal that uniform generalization bounds may not be the right approach for current-scale QML.
    To corroborate this body of numerical work with a rigorous underpinning, we show how QML models can assign arbitrary labels to quantum states. Specifically, we show that PQCs are able to perfectly fit training sets of polynomial size in the number of qubits. By revealing this ability to memorize random data, our results rule out the good generalization guarantees with few training data from uniform bounds~\cite{caro2022generalization, schatzki2022theoretical}.
    To clarify, our experiments do not study the generalization capacity of state-of-the-art QML.
    Instead, we expose the limitation of uniform generalization bounds when applied to these models.
    While QML models have demonstrated good generalization performance in some settings~\cite{bravo2022style,  banchi2021generalization, caro2022generalization, schatzki2022theoretical, Cong2019quantum, kottmann2021variational, jerbi2023power}, our contributions do not explain why or how they achieve it. We highlight that the reasons behind their successful generalization remain elusive.

\section{Results}\label{s:results}

\subsection{Statistical learning theory background}\label{ss:SLT}

    We begin by briefly introducing the necessary terminology for discussing our findings in the framework of supervised learning.
    We denote $\calX$ as the input domain and $\calY$ as the set of possible labels.
    We assume there is an unknown but fixed distribution $\calD(\calX\times\calY)$ from which the data originate.
    Let $\calF$ represent the family of functions that map $\calX$ to $\calY$.
    The expected risk functional $R$ then quantifies the predictive accuracy of a given function $f$ for data sampled according to $\calD$.
    The training set, denoted as $S$, comprises $N$ samples drawn from $\calD$.
    The empirical risk $\hR_S(f)$ then evaluates the performance of a function $f$ on the restricted set $S$.
    The difference between $R(f)$ and $\hR_S(f)$ is referred to as the generalization gap, defined as
    \begin{align}
        \gen(f)&\coloneqq\lvert R(f) - \hR_S(f)\rvert\,.
    \end{align}
    The dependence of $\gen(f)$ on $S$ is implied, as evident from the context. Similarly, the dependence of $R(f)$, $\hR_S(f)$, and $\gen(f)$ on $\calD$ is also implicit.
    We employ $C(\calF)$ to represent any complexity measure of a function family, such as the VC dimension, the Rademacher complexity, or others~\cite{ShalevShwartz2014understanding}.
    It is important to note that these measures are properties of the whole function family $\calF$, and not of single functions $f\in\calF$.

    In the traditional framework of statistical learning, the way in which the aforementioned concepts relate to one another is as follows.
    The primary goal of supervised learning is to minimize the expected risk $R$ associated to a learning task, which is an unattainable goal by construction.
    The so-called bias-variance trade-off stems from rewriting the expected risk as a sum of the two terms
    \begin{align}
        R(f) &= \underbrace{\hR_S(f)}_\text{Empirical risk, bias} + \underbrace{\lvert R(f)-\hR_S(f)\rvert}_\text{Generalization gap, variance}.
    \end{align}
    This characterization as a trade-off arises from the conventional understanding that diminishing one of these components invariably leads to an increase of the other.
    Two negative scenarios exist at the extremes of the trade-off.
    Underfitting occurs when the model exhibits high bias, resulting in an imperfect classification of the training set.
    Conversely, overfitting arises when the model displays high variance, leading to a perfect classification of the training set.
    Overfitting is considered detrimental as it may cause the learning models to learn spurious correlations induced by noise in the training data.
    Accommodating this noise in the data would consequently lead to suboptimal performance on new data, i.e., poor generalization.
    Concerning the model selection problem, practitioners are thus tasked with identifying a model with the appropriate model capacity for each learning task, aiming to strike a balance in the trade-off.
    These notions are explained more extensively in Refs.~\cite{Caro2021encodingdependent, gyurik2021structural}.
    
    The previously described scenario is no longer applicable, as demonstrated below.
    Modern-day (quantum) learning models display good generalization performance while being able to completely overfit the data.
    This phenomenon is sometimes linked to the ability of learning models to memorize data.
    The term memorization is defined here as the occurrence of overfitting without concurrent generalization.
    It is essential to clarify that overfitting, in this context, means perfect fitting of the training set, regardless of its generalization performance.
    Furthermore, a model is considered to have memorized a training set when both overfitting and poor generalization occur simultaneously.
    Overall, a high model capacity, particularly in relation to memorization ability, is found to be non-detrimental in addressing learning tasks of practical significance.
    This phenomenon was initially characterized for large (overparameterized) deep neural networks in Ref.~\cite{zhang2016understanding}.
    In this manuscript, we present analogous, unexpected behavior for current-scale (non-overparameterized) parameterized quantum circuits.

\subsection{Randomization tests}\label{ss:num_results}
    
    Our goal is to improve our understanding of PQCs as learning models.
    In particular, we tread in the domain of generalization and its interplay with the ability to memorize random data.
    The main idea of our work builds on the theory of randomization tests from non-parametric statistics~\cite{Edgington2007randomization}.
    Fig.~\ref{fig:page1} contains a visualization of our framework.

    Initially, we train QNNs on quantum states whose labels have been randomized and compare the training accuracy achieved by the same learning model when trained on the true labels.
    Our results reveal that, in many cases, the models learn to classify the training data perfectly, regardless of whether the labels have been randomized.
    By altering the input data, we reach our first finding:
    \begin{observation}[Fitting random labels]
        Existing QML models can accurately fit 
        random labels to quantum states.
    \end{observation}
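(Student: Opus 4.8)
The plan is to establish the claim at two complementary levels. First, as the empirical statement it literally is, I would run the randomization test described above: fix a concrete QNN architecture $\calF_Q$ together with a collection of $N$ quantum states, replace the true labels by uniformly random ones, optimize the circuit parameters by gradient descent to minimize the empirical risk $\hR_S$, and record the final training accuracy. The observation is confirmed if the optimizer reaches (near-)zero training error on the random labeling, matching what it attains on the true labels. To make the evidence robust I would sweep over several architectures, dataset sizes $N$, and qubit numbers $n$, averaging over many independent relabelings so that the conclusion does not hinge on a single favorable instance.

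Second, to give this finding a rigorous backbone, I would prove an underlying memorization statement by a dimension-counting argument. Writing the model output as the linear functional $f_\theta(\rho)=\Tr[\,U(\theta)\,\rho\,U(\theta)\dagg\,M\,]$, fitting a target labeling $\{y_i\}$ reduces to finding an effective observable $O=U(\theta)\dagg M\,U(\theta)$ with $\Tr[\rho_i O]$ on the correct side of the decision threshold for every $i$. The key linear-algebraic fact is that $N$ distinct input states are, generically, linearly independent in $\Herm(\mathbb{C}^{2^n})$, whose real dimension $4^n$ dwarfs any $N=\poly(n)$; hence the map $O\mapsto(\Tr[\rho_i O])_{i=1}^N$ is surjective onto $\mathbb{R}^N$ and a suitable $O$ exists. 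It then remains to argue that a PQC of polynomial size can realize, or approximate to the required precision, the unitary conjugation implementing such an $O$, which I would obtain from the universality of the chosen gate set together with an explicit interpolating construction.

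The main obstacle is the gap between \emph{expressivity} and \emph{trainability}: the dimension-counting argument certifies only that fitting parameters \emph{exist}, whereas the observation asserts that a practical optimizer \emph{finds} them. Bridging this gap rigorously is delicate, since generic deep circuits suffer from barren plateaus and spurious minima, so I do not expect a clean theoretical guarantee that gradient descent converges; the empirical randomization test must therefore carry the weight on the optimization side. A secondary difficulty is that the reachable observable $O$ has its spectrum fixed by $M$, and the readout is a nonlinear sign/threshold, so the clean surjectivity statement becomes a separability condition rather than an equality; I would handle this by choosing $M$ with enough distinct eigenvalues, or by enlarging the system with ancillas, so that the reachable set of effective observables is rich enough to shatter $N$ states in general position.
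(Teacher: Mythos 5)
Your first, empirical half is essentially the paper's own support for this observation: the authors run exactly this randomization test, training the QCNN of Cong et al.\ on ground states of a generalized cluster Hamiltonian with uniformly randomized labels, sweeping $n\in\{8,16,32\}$ and $N\in\{5,\ldots,20\}$, and finding near-perfect training accuracy with test accuracy at random guessing. You are also right, and in agreement with the paper, that the theoretical complement certifies only expressivity, not trainability, so the optimizer's success must be established empirically.

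The genuine gap is in your proposed theoretical backbone. Your dimension-counting argument shows that the linear map $O\mapsto(\Tr[\rho_i O])_{i=1}^N$ is surjective when $O$ ranges over all of $\Herm(\bbC^{2^n})$, but the set you can actually reach, $O=U(\vartheta)\dagg M U(\vartheta)$, is the unitary orbit of $M$ --- a compact manifold of fixed spectrum, not a linear subspace --- so surjectivity of the trace map gives you nothing about feasibility within it; indeed $\lvert\Tr[\rho_i O]\rvert\leq\lVert M\rVert$ already rules out exact fitting of arbitrary real labels, and you flag this yourself but the proposed fixes (more eigenvalues for $M$, ancillas) are not an argument. The paper escapes the orbit constraint entirely by constructing the observable as a real linear combination of the data states, $\calM_y=\sum_k z_k\rho_k$ with $z$ solving $Wz=y$, which is why it assumes the Gram matrix $W$ is \emph{well-conditioned} (and, for the PQC version, the distinguishability condition of Definition~\ref{def:distinguish}) rather than your weaker generic linear independence: conditioning is what makes the linear system stably and efficiently solvable. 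Relatedly, your appeal to gate-set universality does not deliver a $\poly(n)$-size circuit --- approximating an arbitrary $4^n$-dimensional observable generically costs exponential depth. The paper instead gives an explicit efficient implementation: a SWAP-test-based estimator in Theorem~\ref{thm:finite_qc}, and in Theorem~\ref{thm:finite_pqc} an LCU-style circuit that amplitude-encodes $\hp\propto\lvert\hz\rvert$ on a $\lceil\log_2 N\rceil$-qubit ancilla, applies controlled inverses of the state-preparation unitaries, measures $\lvert0\rangle\!\langle0\rvert\otimes\sum_j\hs_j\lvert j\rangle\!\langle j\rvert$, and rescales by $\lVert\hz\rVert_{l_1}$. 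Without replacing your surjectivity step by some such construction (or another argument that respects the reachable set), your second step does not go through as stated.
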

    
    Next, we randomize only a fraction of the labels.
    We observe a steady increase in the generalization error as the label noise rises.
    This suggests that QNNs are capable of extracting the residual signal in the data while simultaneously fitting the noisy portion using brute-force memorization.
    \begin{observation}[Fitting partially corrupted labels]
        Existing QML models can accurately fit partially corrupted labels to quantum states.
    \end{observation}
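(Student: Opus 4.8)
The plan is to establish this as an empirical companion to the preceding observation (fitting fully random labels) by means of a one-parameter family of randomization experiments. I would introduce a corruption fraction $\eta\in[0,1]$ that controls the probability with which each training label is independently replaced by a uniformly random one, so that $\eta=0$ recovers the clean distribution $\calD$ and $\eta=1$ recovers the fully randomized distribution $\hcalD$. The claim then splits into two parts: (i) the \emph{fitting} part, that near-perfect training accuracy is attainable for every $\eta$; and (ii) the \emph{interpolation} part, that the generalization gap $\gen(f)$ grows steadily with $\eta$, witnessing that the model extracts whatever residual signal survives the corruption while memorizing the noisy remainder by brute force.

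For part (i), I would first note that fitting partially corrupted labels is \emph{no harder} than fitting fully random labels: once the corrupted labels are drawn and fixed, an $\eta$-corrupted training set is simply a particular labeled sample in $\calY^N$, and such labelings are exactly what the random-label experiments (and, rigorously, the later construction realizing arbitrary labelings on polynomially many quantum states) certify as fittable. Thus the fitting claim reduces to the $\eta=1$ case already handled. Concretely, I would fix a PQC architecture from $\calF_Q$ together with a training routine, and for a grid of values of $\eta$ draw corrupted training sets, train to convergence, and record $\hR_S(f)$; the prediction is $\hR_S(f)\approx 0$ across the whole grid.

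For part (ii), I would hold the architecture, the sample size $N$, and the optimizer fixed, and for each $\eta$ estimate the test error on a held-out clean set, averaging over independent corruption draws and random seeds to obtain error bars. Since $\hR_S(f)\approx 0$ throughout by part (i), the test error coincides with $\gen(f)$, and plotting it against $\eta$ should reveal a monotone increase from near zero at $\eta=0$ to its maximal value at $\eta=1$. The \emph{steadiness} of this increase is precisely the quantitative content that distinguishes this observation from the previous one.

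The main obstacle is trainability rather than expressivity: I must ensure the optimizer genuinely reaches $\hR_S(f)\approx 0$ at every corruption level, despite the well-documented barren-plateau and local-minimum pathologies of PQC landscapes. A failure to converge at some $\eta$ could not be distinguished from a true expressivity limitation, so I would mitigate this with careful initialization, sufficiently expressive circuits, and multiple restarts, and cross-check that the training accuracy reached under corruption matches that reached on the clean labels of the same inputs. A secondary, statistical obstacle is showing that the monotone trend in $\gen(f)$ is robust across architectures, data sets, and sample sizes rather than an artifact of a single configuration, which is why I would repeat the entire $\eta$-sweep over several such choices before drawing the conclusion.
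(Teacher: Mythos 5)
Your proposal is correct and matches the paper's own treatment: Observation 2 is established empirically by exactly the sweep you describe, training the QCNN (fixed architecture, optimizer, and $N\in\{4,6,8\}$, $n=8$) on data with corruption ratio $r\in\{0,1/N,\ldots,1\}$, verifying $100\%$ training accuracy at every $r$, and observing the test error rise steadily to the random-guessing level of $0.75$ as $r\to1$ (Fig.~\ref{fig:experiments}(b)), with the convergence-time analysis you flag as a trainability concern also carried out in the inset. The only cosmetic differences are that the paper corrupts a fixed fraction of labels rather than corrupting each label independently with probability $\eta$, and that your reduction of the fitting claim to the $\eta=1$ case is a heuristic the paper does not rely on, since it simply verifies fitting empirically at every corruption level, as you also propose.
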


    In addition to randomizing the labels, we also explore the effects of randomizing the input quantum states themselves and conclude:
    \begin{observation}[Fitting random quantum states]
        Existing QML models can accurately fit labels to random quantum states.
    \end{observation}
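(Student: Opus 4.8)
The plan is to establish this memorization claim on two fronts: an empirical randomization test that directly exhibits the phenomenon, and a constructive existence argument that explains why it should be possible. Since the statement is about \emph{existing} models, I would first fix a standard PQC architecture from the literature --- an input-encoding map together with a trainable unitary $U_\theta$ and a fixed readout observable $O$, so that the hypothesis is $f_\theta(\rho) = \Tr(\rho\, U_\theta\dagg O U_\theta)$. The experimental protocol would be: (i) sample $N$ input states from an ensemble of \emph{random} quantum states (e.g.\ Haar-random pure states, or outputs of random circuits); (ii) attach labels $y_1,\dots,y_N$; (iii) minimize the empirical risk $\hR_S(f_\theta)$ by gradient descent; and (iv) record the resulting training accuracy as a function of $N$ and the qubit number $n$. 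The observation is confirmed if $\hR_S(f_\theta)\rightarrow 0$ across the tested regime, and it is made meaningful by simultaneously tracking the test error to confirm that the perfect fit coexists with poor generalization --- i.e.\ genuine memorization in the sense defined above.

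To give the empirical finding a rigorous backbone, I would prove an existence statement: for generic input states and \emph{any} target labels there is an admissible measurement reproducing them. The key observation is that the prediction is linear in the readout operator, $f(\rho_i) = \Tr(\rho_i M)$ with $M = U_\theta\dagg O U_\theta$, so fitting reduces to solving the linear system $\Tr(\ketbra{\psi_i}{\psi_i} M) = y_i$ for a Hermitian $M$. The map $M \mapsto (\Tr(\ketbra{\psi_i}{\psi_i} M))_{i=1}^N$ acts from the $4^n$-dimensional real space of Hermitian operators into $\bbR^N$, and it is surjective precisely when the projectors $\{\ketbra{\psi_i}{\psi_i}\}_i$ are linearly independent in the Hilbert--Schmidt inner product. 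For random states this holds almost surely whenever $N \le 4^n$, so a solution $M$ exists; since the targets can be taken in $[0,1]$, $M$ can moreover be arranged to satisfy $0 \preceq M \preceq \Id$, making it a valid measurement. Restricting to $N = \poly(n)$ keeps everything within the near-term regime that the observation concerns.

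The main obstacle is the gap between this existence argument and an \emph{efficiently realizable} PQC. The operator $M$ above is produced abstractly, but writing $M = U_\theta\dagg O U_\theta$ with a fixed readout $O$ requires the trainable circuit $U_\theta$ to diagonalize $M$, and a generic diagonalizing unitary needs exponential depth --- so the bare linear-algebra argument does not by itself certify that a polynomial-size, literature-standard ansatz suffices. The theoretically honest route is therefore to exploit that only an $N$-dimensional subspace must be controlled when $N = \poly(n)$, and to build $U_\theta$ so that it implements the required action on $\mathrm{span}\{\ket{\psi_i}\}$ while acting trivially elsewhere; bounding the resources of this construction, rather than establishing mere existence, is the crux. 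On the empirical side the analogous difficulty is optimization: barren plateaus and local minima could prevent gradient descent from reaching the interpolating $\theta$ that the existence argument guarantees, so the experiments must span enough architectures and sizes to show the perfect fit is robust and not an artifact of a particular trainer.
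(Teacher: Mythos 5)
Your two-pronged plan --- a randomization experiment plus a linear-algebraic finite-sample-expressivity argument --- is essentially the paper's own approach to this Observation. The Observation itself rests on the empirical prong: the paper trains a QCNN on $n\in\{8,10,12\}$ qubits with $N\le 20$ using the derivative-free CMA-ES optimizer (which incidentally sidesteps your barren-plateau worry about gradient descent). One substantive difference from your protocol is the choice of randomization: the paper's random states are \emph{not} Haar-random. Each ground state's amplitudes are resampled from a Gaussian $\calN(\mu_\psi,\sigma_\psi)$ matching that state's own amplitude mean and variance, then normalized --- deliberately destroying exactly the local correlations the QCNN architecture is designed to exploit while retaining only global structure. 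Haar-random inputs would still support the Observation as stated, but the paper's structured corruption makes the memorization finding sharper for this particular architecture. Your analytic backbone coincides with the paper's Theorem~\ref{thm:finite_qc}: predictions are linear in the readout operator, so fitting reduces to $\Tr(\rho_i M)=y_i$, solved via $Wz=y$ with $W_{ij}=\Tr(\rho_i\rho_j)$ well-conditioned (for random states $W$ is near-identity, a point you gesture at via almost-sure linear independence) and $\calM_y=\sum_k z_k\rho_k$.

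Two soft spots in your analytic prong deserve naming. First, your claim that the solution can be arranged to satisfy $0\preceq M\preceq\Id$ is asserted rather than proved, and it is unnecessary: the paper allows arbitrary real labels and realizes $\Tr(\rho_i\calM_y)$ operationally by sampling $k$ with probability $p_k\propto\lvert z_k\rvert$, estimating $\Tr(\rho_i\rho_k)$ with a SWAP test, flipping signs according to $\sign(z_k)$, and rescaling by $\sum_j\lvert z_j\rvert$. Second, the efficiency crux you correctly flag but leave open --- a generic unitary diagonalizing $M$ has exponential depth --- is exactly what the paper's Theorem~\ref{thm:finite_pqc} closes, and by a different route than your span-restriction idea: under the distinguishability condition of Definition~\ref{def:distinguish}, one substitutes efficiently PQC-preparable approximations $\hrho_j=U_j\ketbra{0}{0}U_j\dagg$ (e.g.\ mixtures of classical-shadow-like snapshots, with well-conditioning of $\hW_{ij}=\Tr(\rho_i\hrho_j)$ certifiable in polynomial time via the SDP of Theorem~\ref{thm:correctness}), solves $\hW\hz=y$, and implements the labeling as an LCU-style circuit with $\lceil\log_2 N\rceil$ ancillas: amplitude encoding of $\hp_k\propto\lvert\hz_k\rvert$, controlled $U_k\dagg$ on the input register, and the product measurement $\ketbra{0}{0}\otimes\sum_j\hs_j\ketbra{j}{j}$, all in $\poly(n)$ depth. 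Your proposal to control only $\mathrm{span}\{\ket{\psi_i}\}$ would still need a resource bound to complete the argument; without it, your analytic part establishes existence but not the $\poly(n)$-realizability that the paper's construction delivers.
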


    These randomization experiments result in a remarkably large generalization gap after training without changing the circuit structure, the number of parameters, the number of training examples, or the learning algorithm.
    As highlighted in Ref.~\cite{zhang2016understanding} for classical learning models, these straightforward experiments have far-reaching implications:
    \begin{enumerate}
        \item Quantum neural networks already show memorization capability for quantum data.
        \item The trainability of a model remains largely unaffected by the absence of correlation between input states and labels.
        \item Randomizing the labels does not change any properties of the learning task other than the data itself.
    \end{enumerate}
    
    In the following, we present our experimental design and the formal interpretation of our results.
    Even though it would seem that our results contradict established theorems, we elucidate how and why we can prove that uniform generalization bounds are vacuous for currently tested models.

\subsection{Numerical results}

    Here, we show the numerical results of our randomization tests, focusing on a candidate architecture and a well-established classification problem: the quantum convolutional neural network (QCNN)~\cite{Cong2019quantum} and the classification of quantum phases of matter.
    
    \begin{figure}
    	\centering
    	\includegraphics{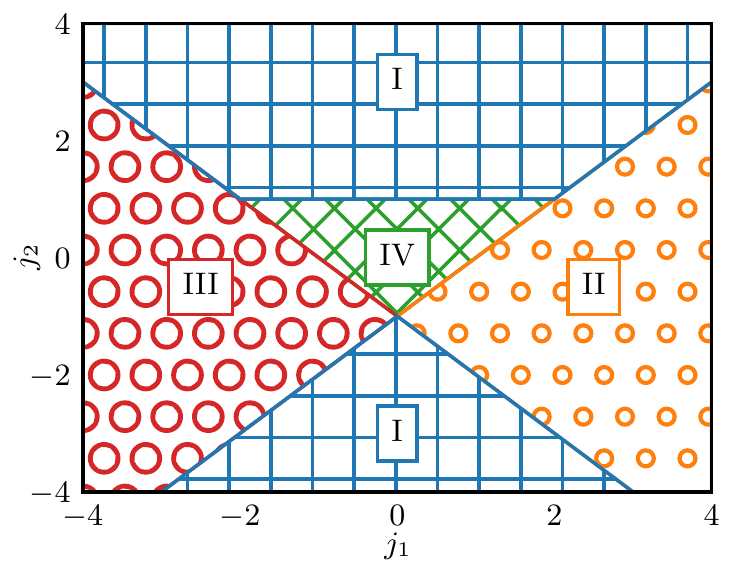}
    	\caption{\textbf{Phase diagram of the generalized cluster Hamiltonian.} The ground-state phase diagram of the Hamiltonian of Eq.~\eqref{eq:hamiltonian}. It comprises the phases: (I) symmetry-protected topological, (II) ferromagnetic, (III) anti-ferromagnetic, and (IV) trivial.}
    	\label{fig:diagram}
    \end{figure}
    
    Classifying quantum phases of matter accurately is a relevant task for the study of condensed-matter physics \cite{Carrasquilla,Sachdev}.
    Moreover, due to its significance, it frequently appears as a benchmark problem in the literature \cite{Carrasquilla,Trebst}.
    In our experiments, we consider the generalized cluster Hamiltonian
    \begin{align}
        H =  \sum_{j=1}^n \left(Z_j - j_1 X_j X_{j+1} - j_2 X_{j-1} Z_j X_{j+1}\right)\,,\label{eq:hamiltonian}
    \end{align}
    where $n$ is the number of qubits, $X_i$ and $Z_i$ are Pauli operators acting on the $i^\text{th}$ qubit, and $j_1$ and $j_2$ are coupling strengths.
    Specifically, we classify states according to which one of four symmetry-protected topological phases they display.
    As demonstrated in Ref.~\cite{verresen2017one}, and depicted in Fig.~\ref{fig:diagram}, the ground-state phase diagram comprises the phases: (I) symmetry-protected topological, (II) ferromagnetic, (III) anti-ferromagnetic, and (IV) trivial.
    
    The learning task we undertake involves identifying the correct quantum phase given the ground state of the generalized cluster Hamiltonian for some choice of $(j_1,j_2)$.
    We generate a training set $S=\{(\lvert\psi_i\rangle,y_i)\}_{i=1}^N$ by sampling coupling coefficients uniformly at random in the domain $j_1,j_2\in[-4,4]$, with $N$ being the number of training data points, $\ket{\psi_i}$ representing the ground state vectors of $H$ corresponding to the sampled $(j_1,j_2)$, and $y_i$ denoting the corresponding phase label among the aforementioned phases.
    In particular, labels are length-two bit strings $y_i \in \{(0,0),(0,1),(1,0),(1,1)\}$.

    We employ the QCNN architecture presented in Ref.~\cite{Cong2019quantum} to address the classification problem.
    By adapting classical convolutional neural networks to a quantum setting, QCNNs are particularly well-suited for tasks involving spatial and temporal patterns, which makes this architecture a natural choice for phase classification problems.
    A unique feature of the QCNN architecture is the interleaving of convolutional and pooling layers.
    Convolutional layers consist of translation-invariant parameterized unitaries applied to neighboring qubits, functioning as filters between feature maps across different layers of the QCNN.
    Following the convolutional layer, pooling layers are introduced to reduce the dimensionality of the quantum state while retaining the relevant features of the data.
    This is achieved by measuring a subset of qubits and applying translationally invariant parameterized single-qubit unitaries based on the corresponding measurement outcomes.
    Thus, each pooling layer consistently reduces the number of qubits by a constant factor, leading to quantum circuits with logarithmic depth relative to the initial system size.
    These circuits share a structural similarity to the multiscale entanglement renormalization ansatz~\cite{vidal2008class}.
   Nevertheless, in instances where the input state to the QCNN exhibits, e.g., a high degree entanglement, the efficient classical simulation of the circuit becomes infeasible.
    
    The operation of a QCNN can be interpreted as a quantum channel $\calC_\vartheta$ specified by parameters $\vartheta$, mapping an input state $\rho_{\text{in}}$ into an output state $\rho_{\text{out}}$, represented as $\rho_{\text{out}} = \calC_\vartheta\left[\rho_{\text{in}}\right]$.
    Subsequently, the expectation value of a task-oriented Hermitian operator is measured, utilizing the resulting $\rho_{\text{out}}$.
    
    Our implementation follows that presented in Ref.~\cite{caro2022generalization}.
    The QCNN maps an input state vector $\ket{\psi}$, consisting of $n$ qubits, into a 2-qubit output state.
    For the labeling function given the output state, we use the probabilities of the outcome of each bit string when the state is measured in the computational basis $(p_{00}, p_{01}, p_{10}, p_{11})$.
    In particular, we predict the label $\hy$ according to the measurement outcome with the lowest probability according to
    \begin{align}
        \lvert\psi\rangle \mapsto (p_b)_{b\in\{0,1\}^2} \mapsto \hy \coloneqq \argmin_{b\in\{0,1\}^2} p_b\,.
    \end{align}
    For each experiment repetition, we generate data from the corresponding distribution $\calD$.
    For training, we use the loss function
    \begin{align}
        \ell\left(\vartheta;\left(\ket{\psi}, y\right) \right) &\coloneqq \bra{y}\left(\calC_\vartheta\left[\lvert\psi_i\rangle\!\langle\psi_i\rvert\right]\right)\ket{y}\,.
    \end{align}
    This classification rule and loss function, which involve selecting the outcome with the lowest probability, was already utilized in Ref.~\cite{caro2022generalization}.
    The authors found that employing this seemingly counter-intuitive loss function lead to good generalization performance.
    Thus, given a training set $S\sim\calD^N$, we minimize the empirical risk
    \begin{align} \label{eq:empirical_risk_phases}
        \hat{R}_S(\vartheta)  &= \frac{1}{N} \sum_{i=1}^N \bra{y_i}\left(\calC_\vartheta\left[\lvert\psi_i\rangle\!\langle\psi_i\rvert\right]\right)\ket{y_i}\,.
    \end{align}

    We consider three ways of altering the original data distribution $\calD_0$ from where data is sampled, namely: (a) data wherein true labels are replaced by random labels $\calD_1$, (b) randomization of only a fraction $r\in[0,1]$ of the data, mixing real and corrupted labels in the same distribution $\calD_r$, and (c) replacing the input quantum states with random states $\calD_\text{st}$, instead of randomizing the labels.
    In each of these randomization experiments, the generalization gap and the risk functionals are defined according to the relevant distribution $\hcalD\in\{\calD_1,\calD_r,\calD_\text{st}\}$.
    In all cases, the correlations between states and labels are gradually lost, which means we can control how much signal there is to be learned.
    In experiments where data-label correlations have vanished entirely, learning is impossible.
    One could expect the impossibility of learning to manifest itself during the training process, e.g., through lack of convergence.
    We observe that training the QCNN model on random data results in almost perfect classification performance on the training set.
    At face value, this means the QCNN is able to memorize noise.
    
    In the following experiments, we approximate the expected risk $R$ with an empirical risk $\hR_T$ using a large test set $T$. This test set is sampled independently from the same distribution as the training set $S$. In particular, the test set contains $1000$ points for all the experiments, $T\sim\calD^{1000}$.
    
    Additionally, we report our results using the probability of error, which is further elucidated below.
    Consequently, we employ the term error instead of risk.
    Henceforth, we refer to test accuracy and test error as accurate proxies for the true accuracy and expected risk, respectively.
    All our experiments follow a three-step process:
    \begin{enumerate}
        \item Create a training set $S\sim\calD^N$ and a test set $T\sim\calD^{1000}$.
        \item Find a function $f$ that approximately minimizes the empirical risk of Eq.~\eqref{eq:empirical_risk_phases}.
        \item Compute the training error $\hR_S(f)$, test error $\hR_T(f)$, and the empirical generalization gap $\gen_T(f)=\lvert\hR_T(f)-\hR_S(f)\rvert$.
    \end{enumerate}
    For ease of notation, we shall employ $\gen(f)$ instead of $\gen_T(f)$ while discussing the generalization gap without reiterating its empirical nature.
    
    \begin{figure*}
             \centering
             \includegraphics{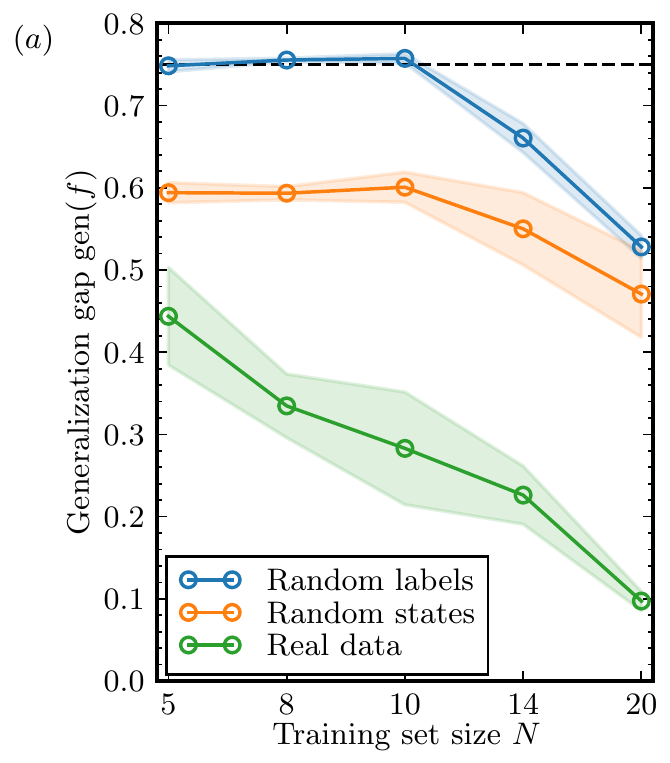}
             \includegraphics{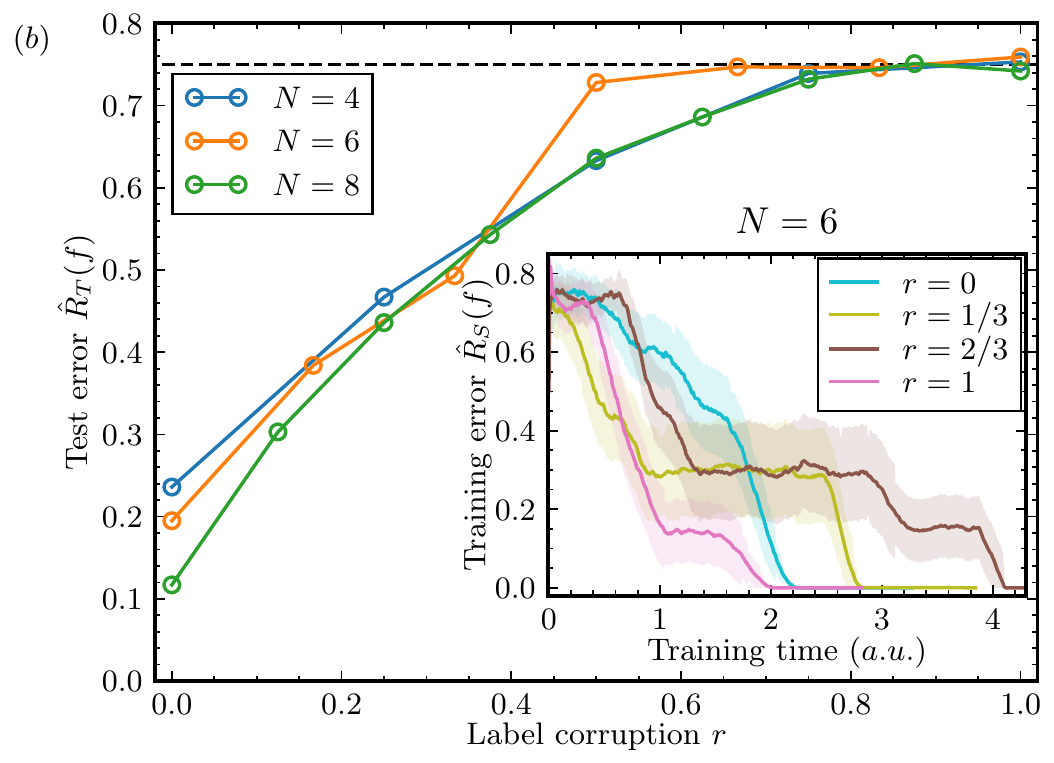}
             \caption{\label{fig:experiments}\textbf{Randomization tests for quantum phase recognition.}
                    (a) Generalization gap as a function of the training set size achieved by the quantum convolutional neural network (QCNN) architecture. The QCNN is trained on real data, random label data, and random state data.
                    The horizontal dashed line is the largest generalization gap attainable, characterized by zero training error and test error equal to random guessing ($0.75$ due to the task having four possible classes).
                    The shaded area corresponds to the standard deviation across different experiment repetitions. For the real data and random labels, we employed $8, 16$, and $32$ qubits, while for the random states, we employed $8, 10$, and $12$ qubits.
                    We observe that both random labels and random states exhibit a similar trend in the generalization gap, with a slight discrepancy in height due to the different relative frequencies of the four classes under the respective randomization protocols.
                    In both cases, the test accuracy fails to surpass that of random guessing.
                    Notably, the largest generalization gap occurs in the random labels experiments when using a training set of up to size $N=10$, highlighting the memorization capacity of this particular QCNN.
                    The training with uncorrupted data yields behavior in accordance with previous results~\cite{caro2022generalization}.
                    (b) Test error as a function of the ratio of label corruption after training the QCNN on training sets of size $N \in {4, 6, 8}$ and $n=8$.
                    The plot illustrates the interpolation between uncorrupted data ($r=0$) and random labels ($r=1$). As the label corruption approaches $1$, the test accuracy drops to levels of random guessing. The dependence between the test error and label corruption reveals the ability of the QCNN to extract remaining signal despite the noise in the initial training set. The inset focuses on the case $N=6$. It conveys the optimization speed for four different levels of corruption, namely, $0,2,4$ and $6$ out of $6$ labels being corrupted, and provides insights into the average convergence time.
                    The shaded area denotes the variance over five experiment repetitions with independently initialized QCNN parameters.
                    Surprisingly, on average, fitting completely random noise takes less time than fitting unperturbed data.
                    This phenomenon emphasizes that QCNNs can accurately memorize random data.
                }
    \end{figure*}

\paragraph{Random labels:} \label{sec:rand_labels}
    We start our randomization tests by drawing data from $\calD_1$, wherein the true labels have been 
    replaced by random labels sampled uniformly from $\{(0,0), (0,1), (1,0), (1,1)\}$.
    In order to sample from $\calD_1$, a labeled pair can be obtained from the original data distribution $(\lvert\psi\rangle,y)\sim\calD_0$, after which the label $y$ can be randomly replaced.
    In this experiment, we have employed QCNNs with varying numbers of qubits $n\in\{8, 16, 32\}$.
    For each qubit number, we have generated training sets with different sizes $N\in\{5, 8, 10, 14, 20\}$ for both random and real labels.
    The models were trained individually for each $(n,N)$ combination.

    In Fig.~\ref{fig:experiments}~(a), we illustrate the results obtained when fitting random and real labels, as well as random states (discussed later).
    Each data point in the figure represents the average generalization gap achieved for a fixed training set size $N$ for the different qubit numbers $n$.
    We observe a large gap for the random labels, close to $0.75$, which should be seen as effectively maximal: perfect training accuracy and the same test accuracy as random guessing would yield. This finding suggests that the QCNN can be adjusted to fit the random labels in the training set, despite the labels bearing no correlation to the input states.
    As the training set sizes increase, since the capacity of the QCNN is fixed, achieving a perfect classification accuracy for the entire training set becomes increasingly challenging. Consequently, the generalization gap diminishes.
    It is worth noting that a decrease in training accuracy is also observed for the true labeling of data~\cite{caro2022generalization}.

\paragraph{Corrupted labels:} \label{sec:corrup_labels}
    Next to the randomization of labels, we further investigate the QCNN fitting behavior when data come with varying levels of label corruption $\calD_r$, ranging from no labels being altered ($r=0$) to all of them being corrupted ($r=1$).
    The experiments consider different number of training points  $N\in\{4, 6, 8\}$, and a fixed number of qubits $n = 8$.
    For each combination of $(n,N)$, we start the experiments with no randomized labels ($r=0$). Then, we gradually increase the ratio of randomized labels until all labels are altered, that is, $r\in\{0, 1/N, 2/N, \ldots, 1\}$.
    Fig.~\ref{fig:experiments}~(b) shows the test error after convergence.
    In all repetitions, this experiment reaches $100\%$ training accuracy.
    We observe a steady increase in the test error as the noise level intensifies. This suggests that QCNNs are capable of extracting the remaining signal in the data while simultaneously fitting the noise by brute force. As the label corruption approaches $1$, the test error converges to $75\%$, corresponding to the performance of random guessing.

    The inset in Fig.~\ref{fig:experiments}~(b) focuses on the experiments conducted with $N=6$ training points.
    In particular, we examine the relationship between the learning speed and the ratio of random labels.
    The plot shows an average over five experiment repetitions.
    Remarkably, each individual run exhibits a consistent pattern: the training error initially remains high, but it converges quickly once the decrease starts. This behavior was also reported for classical neural networks~\cite{zhang2016understanding}.
    The precise moment at which the training error begins to decrease seems to be heavily dependent on the random initialization of the parameters. However, it also relates to the signal-to-noise ratio $r$ in the training data.
    Notably, we observe a long and stable plateau for the intermediate cases $r=1/3$ and $r=2/3$, roughly halfway between the starting training error and zero.
    This plateau represents an average between those runs where the rapid decrease has not yet started and those where the convergence has already been achieved, leading to significant variance.
    Interestingly, in the complete absence of correlation between states and labels ($r=1$), the QCNN, on average, perfectly fits the training data even slightly faster than for the real labels ($r=0$).

\paragraph{Random states:} \label{s:random_states}
    In this scenario, we introduce randomness to the input ground state vectors rather than to the labels.
    Our goal is to introduce a certain degree of randomization into the quantum states while preserving some inherent structure in the problem.
    To achieve this, we define the data distribution $\calD_\text{st}$ for the random quantum states in a specific manner instead of just drawing pure random states uniformly.
    
    To sample data from $\calD_\text{st}$, we first draw a pair from the original distribution $(\lvert\psi\rangle,y)\sim\calD_0$, and then we apply the following transformation to the state
    vector $\lvert\psi\rangle$:
    We compute the mean $\mu_\psi$ and variance $\sigma_\psi$ of its amplitudes and then sample new amplitudes randomly from a Gaussian distribution $\calN(\mu_\psi,\sigma_\psi)$.
    After the new amplitudes are obtained, we normalize them.
    The random state experiments were performed with varying numbers of qubits $n\in\{8, 10, 12\}$ and training set sizes $N\in\{5, 8, 10, 14, 20\}$.

    In Fig.~\ref{fig:experiments}~(a), we show the results for fitting random input states, together with the random and real label experiment outcomes.
    The empirical generalization gaps achieved by the QCNN for random states exhibit a similar shape to those obtained for random labels.
    Indeed, a slight difference in the relative occurrences of each of the four classes leads to improved performance by biased random guessing.
    We observe that the QCNN can perfectly fit the training set for few data, and then the generalization gap decreases, analogously to the scenario with random labels.

    The case of random states presents an intriguing aspect. The QCNN architecture was initially designed to unveil and exploit local correlations in input quantum states~\cite{Cong2019quantum}.
    However, our randomization protocol in this experiment removes precisely all local information, leaving only global information from the original data, such as the mean and the variance of the amplitudes. This was not the case in the random labels experiment, where the input ground states remained unaltered while only the labels were modified.
    The ability of the QCNN to memorize random data seems to be unaffected despite its structure to exploit local information.

\subsection{Implications}\label{sss:implications}

    Our findings indicate that novel approaches are required in studying the capabilities of quantum neural networks.
    Here, we elucidate how our experimental results fit the statistical learning theoretic framework.
    The main goal of machine learning is to find the expected risk minimizer $f^\text{opt}$ associated with a given learning task,
    \begin{align}
        f^\text{opt} &\coloneqq \argmin_{f\in\calF} R(f)\,.
    \end{align}
    However, given the unknown nature of the complete data distribution $\calD$, the evaluation of $R$ becomes infeasible. Consequently, we must resort to its unbiased estimator, the empirical risk $\hR_S$.
    We let an optimization algorithm obtain $f\conj$, an approximate empirical risk minimizer
    \begin{align}
        f\conj &\approx \argmin_{f\in\calF} \hR_S(f)\,.
    \end{align}
    Nonetheless, although $\hR_S(f)$ is an unbiased estimator for $R(f)$, it remains uncertain whether the empirical risk minimizer $f\conj$ will yield a low expected risk $R(f\conj)$.
    The generalization gap $\gen(f)$ then comes in as the critical quantity of interest, quantifying the difference in performance on the training set $\hR_S(f)$ and the expected performance on the entire domain $R(f)$.

    In the literature, extensive efforts have been invested in providing robust guarantees on the magnitude of the generalization gap of QML models through so-called generalization bounds~\cite{ShalevShwartz2014understanding, caro2020pseudodimension, abbas2020power, banchi2021generalization, bu2021effects, bu2021rademacher, bu2021statistical, du2021efficient, gyurik2021structural, caro2022generalization, schatzki2022theoretical, peters2022generalization}.
    These theorems assert that under reasonable assumptions, the generalization gap of a given model can be upper bounded by a quantity that can depend on various parameters. These include properties of the function family, the optimization algorithm used, or the data distribution.
    The derivation of a generalization bound for a learning model typically involves rigorous mathematical calculations and often considers restricted scenarios. Many results in the literature fit the following template:
    \begin{remark}[Generic uniform generalization bound.]
        Let $\calF$ be a hypothesis class, and let $\calD$ be any data-generating distribution.
        Let $R$ be a risk functional associated to $\calD$, and $\hR_S$ its empirical version, for a given set of $N$ labeled data: $S\sim\calD^N$.
        Let $C(\calF)$ be a complexity measure of $\calF$.
        Then, for any function $f\in\calF$, the generalization gap $\gen(f)$ can be upper bounded, with high probability, by
        \begin{align}\label{eq:genbound}
            \gen(f) &\leq g_\text{unif}(\calF)\,,
        \end{align}
        where usually $g_\text{unif}(\calF)\in\calO\left(\poly(C(\calF), 1/N)\right)$ is given explicitly.
        We make the dependence of $g_\text{unif}$ on $N$ implicit for clarity.
        The high probability is taken with respect to repeated sampling from $\calD$ of sets $S$ of size $N$.
    \end{remark}
    
    We refer to these as uniform generalization bounds by virtue of them being equal for all elements $f$ in the class $\calF$.
    Also, these bounds apply irrespective of the probability distribution $\calD$. There exists a singular example that does not fit the template in Ref.~\cite{du2022demystify}. In this particular case, the authors introduce a robustness-based complexity measure, resulting in a bound that depends on both the data distribution and the learned hypothesis, albeit very indirectly. As a result, it presents difficulties for quantitative predictions.
    
    The usefulness of uniform generalization bounds lies in their ability to provide performance guarantees for a model before undertaking any computationally expensive training.
    Thus, it becomes of interest to identify ranges of values for $C(\calF)$ and $N$ that result in a diminishing or entirely vanishing generalization gap (such as the limit $N\to\infty$).
    These bounds usually deal with asymptotic regimes.
    Thus it is sometimes unclear how tight their statements are for practical scenarios. 

    In cases where the risk functional is itself bounded, we can further refine the bound.
    For example, if we take $R^e$ to be the probability of error
    \begin{align}
        R^e(f) = \bbP_{(x,y)\sim\calD} \left[f(x) \neq y\right]\in[0,1]\,,
    \end{align}
    we can immediately say that, for any $f$, there is a trivial upper bound on the generalization gap $\gen(f)\leq1$.
    Thus, the generalization bound could be rewritten as
    \begin{align}
        \gen(f) &\leq \min\left\{1, g_\text{unif}(\calF)\right\}\,.
        \label{eq:upperlimit}
    \end{align}
    This additional threshold renders the actual value of $g_\text{unif}(\calF)$ of considerable significance.
    
    We now have the necessary tools to discuss the results of our experiments properly.
    Randomizing the data simply involves changing the data-generating distribution, e.g., from the original $\calD_0$ to a randomized $\hcalD\in\{\calD_1,\calD_r,\calD_\text{st}\}$.
    As we have just remarked, the r.h.s.\ of Eq.~\eqref{eq:genbound} does not change for different distributions, implying that the same upper bound on the generalization gap applies to both data coming from $\calD_0$, or corrupted data from $\hcalD$. If data from $\hcalD$ is such that inputs and labels are uncorrelated, then any hypothesis cannot be better than random guessing in expectation. This results in the expected risk value being close to its maximum.
    For instance, in the case of the probability of error and a classification task with $M$ classes, if each input is assigned a class uniformly at random, then it must hold for any hypothesis $f$,
    \begin{align}
        R^e(f) &\approx 1 - \frac{1}{M}\,,
    \end{align}
    indicating that the expected risk must always be large.

    A large risk for a particular example does not generally imply a large generalization gap $\gen(f)\not\approx R^e(f)$.
    For instance, if a learning model is unable to fit a corrupted training set $S$, $\hR^e_S(f)\approx R^e(f)$, then one would have a small generalization gap $\gen(f)\approx0$. Conversely, for the generalization gap of $f$ to be large $\gen(f) \approx 1 - 1/M$, the learning algorithm must find a function that can actually fit $S$, with $\hR^e_S(f)\approx0$.
    Yet, even in this last scenario, the uniform generalization bound still applies.
    
    Let us denote $N'$ the size of the largest training set $S$ for which we found a function $f_r$ able to fit the random data $\hR_S^e(f_r)\approx0$ (which leads to a large generalization gap $\gen(f_r)\approx 1-1/M$).
    Since the uniform generalization bound applies to all functions in the class $f\in\calF$, we have found 
    \begin{align}
         g_\text{unif}(\calF) &\gtrsim 1 - \frac{1}{M}
    \end{align}
    as an empirical lower bound to the generalization bound.
    This reveals that the generalization bound is vacuous for training sets of size up to $N'$.
    Noteworthy is also that, further than $N'$, there is a regime where the generalization bound remains impractically large.

    The strength of our results resides in the fact that we did not need to specify a complexity measure $C(\calF)$.
    Our empirical findings apply to every uniform generalization bound, irrespective of its derivation.
    This gives strong evidence for the need for a perspective shift to the study of generalization in quantum machine learning.

\subsection{Analytical results}\label{ss:ana_results}
    
    In the previous section, we provided evidence that QNNs can accurately fit random labels in near-term experimental set-ups.
    Our empirical findings are restricted to the number of qubits and training samples we tested.
    While these limitations seem restrictive, they are actually the relevant regimes of interest, considering the empirical evidence.
    In this section, we formally study the memorization capability of QML models of arbitrary size, beyond the NISQ era, in terms of finite sample expressivity.
    Our goal is to establish sufficient conditions for demonstrating how QML models could fit arbitrary training sets, and not to establish that it is always possible in a worst-case scenario.

    Finite sample expressivity refers to the ability of a function family to memorize arbitrary data.
    In general, expressivity is the ability of a hypothesis class to approximate functions in the entire domain $\calX$.
    Conversely, finite sample expressivity studies the ability to approximate functions on fixed-size subsets of $\calX$.
    Although finite sample expressivity is a weaker notion of expressivity, it can be seen as a stronger alternative to the pseudo-dimension of a hypothesis family~\cite{ShalevShwartz2014understanding, caro2020pseudodimension}.

    The importance of finite sample expressivity lies in the fact that machine learning tasks always deal with finite training sets.
    Suppose a given model is found to be able to realize any possible labeling of an available training set. Then, reasonably one would not expect the model to learn meaningful insights from the training data.
    It is plausible that some form of learning may still occur, albeit without a clear understanding of the underlying mechanisms. However, under such circumstances, uniform generalization bounds would inevitably become trivial.

    \begin{theorem}[Finite sample expressivity of quantum circuits]\label{thm:finite_qc}
        Let $\rho_1,\ldots,\rho_N$ be unknown quantum states on $n\in\bbN$ qubits, with $N\in\calO(\poly(n))$, and let $W$ be the Gram matrix
        \begin{align}
            [W]_{i,j} &= \Tr(\rho_i\rho_j)\,.
        \end{align}
        If $W$ is well-conditioned, then, for any $y_1,\ldots,y_N\in\bbR$ real numbers, we can construct a quantum circuit of depth $\poly(n)$ as an observable $\calM_y$ such that
        \begin{align}
            \Tr(\rho_i\calM_y) &= y_i\,.
        \end{align}
    \end{theorem}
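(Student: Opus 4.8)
The plan is to take the observable to be a linear combination of the data states themselves, $\calM_y = \sum_{j=1}^N c_j \rho_j$, and to pick the real coefficients $c_j$ so that all measurement constraints hold at once. Substituting this ansatz and using linearity of the trace,
\[
    \Tr(\rho_i \calM_y) = \sum_{j=1}^N c_j \Tr(\rho_i \rho_j) = \sum_{j=1}^N [W]_{ij}\, c_j = (Wc)_i,
\]
so the $N$ requirements $\Tr(\rho_i \calM_y) = y_i$ collapse into the single linear system $Wc = y$ with $y = (y_1,\ldots,y_N)^\intercal$. Since each $\rho_j$ is Hermitian and the $c_j$ are real, $\calM_y$ is automatically Hermitian, hence a legitimate observable.

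First I would invoke the well-conditioning hypothesis: $W$ is a real symmetric positive-semidefinite Gram matrix, and well-conditioning means its smallest eigenvalue is bounded away from zero, so $W$ is invertible with $\norm{W^{-1}}$ controlled by the inverse of that eigenvalue. Setting $c = W^{-1} y$ then solves the system exactly, yielding $\Tr(\rho_i \calM_y) = y_i$ for every $i$, while simultaneously bounding the coefficients, $\norm{c} \le \norm{W^{-1}}\,\norm{y}$. This boundedness is what keeps the construction meaningful, since the operator norm obeys $\norm{\calM_y} \le \sum_j |c_j| \le \sqrt{N}\,\norm{c}$, which stays polynomially bounded because $N \in \calO(\poly(n))$.

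It then remains to realize $\calM_y$ as a $\poly(n)$-depth circuit. The key tool is the swap-test identity $\Tr(\rho_i \rho_j) = \Tr\!\big(\SWAP\,(\rho_i \otimes \rho_j)\big)$, together with the fact that the $n$-qubit $\SWAP$ factorizes into $n$ parallel two-qubit swaps and is read out by a single-ancilla Hadamard test of depth $\calO(n)$. To accommodate coefficients of either sign, I would split $c$ into positive and negative parts with total weights $s_{+} = \sum_{j: c_j>0} c_j$ and $s_{-} = \sum_{j: c_j<0} |c_j|$, form the sign-resolved mixtures $\sigma^{\pm} = \tfrac{1}{s_{\pm}} \sum_{j:\, \pm c_j>0} |c_j|\,\rho_j$ (genuine density operators, preparable by classically sampling an index $j$ and then preparing $\rho_j$), and recover $y_i = s_{+}\,\Tr(\rho_i \sigma^{+}) - s_{-}\,\Tr(\rho_i \sigma^{-})$ from two swap tests run against $\rho_i$. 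Each test has depth $\poly(n)$ and there are $\poly(n)$ contributing terms, so the whole procedure is $\poly(n)$.

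The main obstacle I anticipate is the realizability step rather than the algebra: one must argue that the abstract operator $\sum_j c_j \rho_j$ can be accessed by a shallow circuit \emph{without} diagonalizing the unknown states, and that no unbounded-coefficient pathology spoils the depth count. Both are resolved by the well-conditioning assumption, which caps $\norm{c}$ and hence the normalizations $s_{\pm}$, and by the mixture-plus-swap-test realization, which needs only the ability to prepare the reference states probabilistically rather than any explicit spectral decomposition.
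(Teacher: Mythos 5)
Your proposal is correct and follows essentially the same route as the paper's proof: the same ansatz $\calM_y=\sum_j c_j\rho_j$ with coefficients obtained by solving $Wc=y$ (unique by well-conditioning), and the same realization via $\SWAP$ tests with classical sampling of the index $j$ proportionally to $\lvert c_j\rvert$, signs tracked separately, and a final rescaling by $\sum_j\lvert c_j\rvert$. Your packaging of the signs into two mixtures $\sigma^{\pm}$ is only a cosmetic variant of the paper's per-sample sign flip, so no substantive difference remains.
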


    The proof is given in Appendix~\ref{a:theorem_1}.
    Theorem~\ref{thm:finite_qc} gives us a constructive approach to, given a finite set of quantum states and real labels, find a quantum circuit that produces each of the labels as the expectation value for each of the input states.
    This should give an intuition for why QML models seem capable of learning random labels and random quantum states.
    Nevertheless, as stated, the theorem falls short in applying specifically to PQCs.
    The construction we propose requires query access to the set of input states every time the circuit is executed.
    We estimate the values $\Tr(\rho_i\rho_j)$ employing the $\SWAP$ test.
    The circuit that realizes the $\SWAP$ test bears little relation to usual QML ans\"atze.
    Ideally, if possible, one should impose a familiar PQC structure and drop the need to use the input states.

    Next, we propose an alternative, more restricted version of the same statement, keeping QML in mind as the desired application.
    For it, we need a sense of distinguishability of quantum states.
    \begin{definition}[Distinguishability condition]\label{def:distinguish}
        We say $n$-qubit quantum states $\rho_1, \ldots, \rho_N$ fulfill the distinguishability condition if we can find intermediate states $\rho_i\mapsto\hrho_i$ based on some generic quantum state approximation protocol such that they fulfill the following:
        \begin{enumerate}
            \item For each $i\in[N]$, $\hrho_i$ is efficiently preparable with a PQC.
            \item The matrix $\hW$ can be efficiently constructed, with entries
                \begin{align}
                    \hW_{i,j} &= \Tr(\rho_i\hrho_j)\,.
                \end{align}
            \item The matrix $\hW$ is well-conditioned.
        \end{enumerate}
    \end{definition}

    Notable examples of approximation protocols are those inspired by classical shadows~\cite{huang2020predicting} or tensor networks~\cite{rudolph2022decomposition}.
    For instance, similarly to classical shadows, one could draw unitaries from an approximate $\poly(n)$-design using a brickwork ansatz with $\poly(n)$-many layers of i.i.d.~Haar random $2$-local gates.
    For a given quantum state $\rho$, one produces several pairs $(U,b)$ where $U$ is the randomly drawn unitary and $b$ is the bit-string outcome after performing a computational basis measurement of $U\rho U\dagg$, and one refers to each individual pair as a snapshot.
    Notice that this approach does not follow exactly the traditional classical shadows protocol. Our end goal is to prepare the approximation as a PQC, rather than utilizing it for classical simulation purposes.
    In particular, we do not employ the inverse measurement channel, since that would break complete positivity and thus the corresponding approximation would not be a quantum state.
    For each snapshot, one can efficiently prepare the corresponding quantum state $U\dagg\lvert b\rangle\!\langle b\rvert U$ by undoing the unitary that was drawn after preparing the corresponding computational basis state vector $\lvert b\rangle$.
    Given a collection of snapshots $\{(U_1,b_1),\ldots,(U_M,b_M)\}$, an approximation protocol would consist of preparing the mixed state $\frac{1}{M}\sum_{m=1}^M U_m\dagg\lvert b_m\rangle\!\langle b_m\rvert U_m$.
    Since each $b_m$ is prepared with at most $n$ Pauli-$X$ gates and each $U_m$ is a brickwork PQC architecture, this approximation protocol fulfills the restriction of efficient preparation from Definition~\ref{def:distinguish}.
    Whether or not this or any other generic approximation protocol is accurate enough for a specific choice of quantum states we discuss in Section~\ref{ss:SDP}.
    There, we present Algorithm~\ref{alg:convex} together with its correctness statement as Theorem~\ref{thm:correctness}.
    Given the input states $\rho_1,\ldots,\rho_N$ Algorithm~\ref{alg:convex} moreover allows to combine several quantum state approximation protocols in order to produce a well-conditioned matrix of inner products $\hW$.

    \begin{theorem}[Finite sample expressivity of PQCs]\label{thm:finite_pqc}
        Let $\rho_1, \ldots, \rho_N$ be unknown quantum states on $n\in\bbN$ qubits, with $N\in\calO(\poly(n))$, and fulfilling the distinguishability condition of Definition~\ref{def:distinguish}.
        Then, we can construct a PQC of $\poly(n)$ depth as a parameterized observable $\hcalM(\vartheta)$ such that, for any $y=(y_1,\ldots, y_N)\in\bbR$ real numbers, we can efficiently find a specification of the parameters $\vartheta_y$ such that
        \begin{align}
            \Tr(\rho_i\hcalM(\vartheta_y)) &= y_i\,.
        \end{align}
    \end{theorem}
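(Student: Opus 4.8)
The plan is to mirror the proof of Theorem~\ref{thm:finite_qc}, replacing the inaccessible input states $\rho_j$ by the PQC-preparable surrogates $\hrho_j$ furnished by the distinguishability condition, and then upgrading the resulting operator into a genuine parameterized quantum circuit. First I would take the parameterized observable to be the linear combination
\begin{align}
  \hcalM(\vartheta) \coloneqq \sum_{j=1}^N \vartheta_j\,\hrho_j\,.
\end{align}
Because each $\hrho_j$ is a density operator and each $\vartheta_j\in\bbR$, this is manifestly Hermitian and hence a valid observable. Its action on the true inputs is
\begin{align}
  \Tr(\rho_i\hcalM(\vartheta)) = \sum_{j=1}^N \vartheta_j \Tr(\rho_i\hrho_j) = (\hW\vartheta)_i\,,
\end{align}
so producing the labels reduces to solving the linear system $\hW\vartheta_y = y$. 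By Definition~\ref{def:distinguish} the matrix $\hW$ is efficiently constructible and well-conditioned; hence it is invertible, the solution $\vartheta_y = \hW^{-1}y$ is computable efficiently on a classical computer, and its norm is controlled by $\norm{\vartheta_y}\le\norm{y}/\sigma_{\min}(\hW)$, which stays polynomially bounded. This is the direct analogue of the linear-algebra step behind Theorem~\ref{thm:finite_qc}, now phrased entirely in terms of the surrogate states.

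Then I would realize $\hcalM(\vartheta)$ as a PQC, which is where the distinguishability condition earns its keep. Each $\hrho_j$ is preparable by a $\poly(n)$-depth circuit, so let $U_j$ be its (Stinespring-dilated) preparation unitary. I would introduce an index register of $\lceil\log_2 N\rceil = \calO(\log n)$ ancilla qubits and assemble a linear-combination-of-unitaries (block-encoding) circuit: a state-preparation (PREPARE) subroutine acting only on the index register, whose rotation angles encode $\sqrt{|\vartheta_j|}/\sqrt{\norm{\vartheta}_1}$ together with the signs $\sign(\vartheta_j)$, followed by a selection (SELECT) subroutine $\sum_j \ketbra{j}{j}\otimes U_j$ that conditionally prepares the surrogates, and finally a fixed measurement on the index register. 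By construction the effective observable seen by the input register is proportional to $\sum_j \vartheta_j\hrho_j = \hcalM(\vartheta)$, with the normalization $\norm{\vartheta}_1$ absorbed into the rescaling of the readout. Crucially, only the PREPARE angles depend on $\vartheta$, so a single fixed circuit template serves all label vectors $y$; given $y$ we merely load $\vartheta_y$ into the PREPARE rotations. The depth is $\poly(n)$ because each $U_j$ is $\poly(n)$, there are $N=\poly(n)$ of them, the SELECT oracle multiplexes them with $\poly(n)$ overhead, and PREPARE on $\calO(\log n)$ qubits is negligible.

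The main obstacle is this PQC realization rather than the linear algebra. Three points need care: (i) the $\hrho_j$ are generically mixed, e.g.\ the classical-shadow-style ensembles discussed after Definition~\ref{def:distinguish}, so the SELECT oracle must act on a purifying ancilla and one must verify that tracing it out reproduces exactly the convex combination; (ii) the coefficients $\vartheta_j$ are signed, which PREPARE must encode faithfully via phases so that the post-selected amplitude reads out $(\hW\vartheta)_i$ rather than $\sum_j|\vartheta_j|\Tr(\rho_i\hrho_j)$; and (iii) the LCU normalization scales with $\norm{\vartheta}_1$, so the measured expectation must be rescaled, and it is precisely the well-conditioning of $\hW$ that keeps this rescaling---and hence the required measurement precision---polynomially bounded. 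In contrast to Theorem~\ref{thm:finite_qc}, which used a $\SWAP$ test and therefore needed fresh query access to the true inputs at evaluation time, here the surrogate states are hard-wired into the circuit, so $\hcalM(\vartheta)$ is a bona fide parameterized observable with no runtime dependence on the $\rho_i$.
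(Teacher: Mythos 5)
Your overall route is the same as the paper's: obtain PQC-preparable surrogates $\hrho_j$ from Definition~\ref{def:distinguish}, solve the well-conditioned linear system $\hW\vartheta_y=y$ classically, define $\hcalM(\vartheta_y)=\sum_j \hz_j\hrho_j$, and realize it with a $\lceil\log_2 N\rceil$-qubit index register via amplitude encoding of the weights, controlled preparation unitaries, and an $\ell_1$ rescaling of the readout. However, your readout step fails as written, in two concrete ways. First, \enquote{a fixed measurement on the index register} alone cannot reproduce $\Tr(\rho_i\hcalM(\vartheta_y))$: after SELECT, a diagonal index observable yields
\begin{align}
\sum_j \hp_j \hs_j \Tr\left[U_j\dagg\rho_i U_j\right] = \sum_j \hp_j \hs_j\,,
\end{align}
which is independent of $\rho_i$. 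One must also measure the projector $\lvert0\rangle\!\langle0\rvert$ on the input register; that projector is exactly what turns the conditioned unitary into the effective POVM element $U_j\lvert0\rangle\!\langle0\rvert U_j\dagg=\hrho_j$, and the diagonality of the auxiliary observable is what kills the $j\neq j'$ cross terms $\langle 0\rvert U_j\dagg\rho_i U_{j'}\lvert0\rangle$. (Relatedly, SELECT should apply $U_j\dagg$ to the input register—un-preparation followed by projection onto $\lvert 0\rangle$—rather than $U_j$; applying $U_j$ to $\rho_i$ does not \enquote{prepare the surrogate}.) Second, your sign mechanism does not work: encoding $\sign(\hz_j)$ as phases in PREPARE is futile here, because the expectation value involves the prepared index state and its conjugate symmetrically, so $e^{i\phi_j}e^{-i\phi_j}=1$ and the signs cancel; nor can a sign be absorbed into $U_j$ as a global phase, since $\hrho_j$ arises by conjugation. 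The paper's fix is to keep PREPARE positive, $V(\vartheta^{(1)}_y)\lvert0\rangle=\sum_j\sqrt{\hp_j}\lvert j\rangle$, and to move the signs into the auxiliary observable $\calO_\text{aux}(\vartheta^{(2)}_y)=\sum_j\hs_j\lvert j\rangle\!\langle j\rvert$, measuring $\lvert0\rangle\!\langle0\rvert\otimes\calO_\text{aux}(\vartheta^{(2)}_y)$ and multiplying the outcome by $\sum_k\lvert\hz_k\rvert$.

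On the positive side, your care point about mixed surrogates is well placed: the paper's proof silently assumes pure surrogates $\hrho_j=U_j\lvert0\rangle\!\langle0\rvert U_j\dagg$, even though the shadow-inspired protocol sketched after Definition~\ref{def:distinguish} produces mixtures; this is repairable by classically sampling one snapshot per circuit run (or by purification), but the paper does not address it. Likewise, your explicit bound $\lVert\vartheta_y\rVert\leq\lVert y\rVert/\sigma_{\min}(\hW)$ makes quantitative a point the paper leaves implicit in the well-conditioning assumption. Still, until the measurement is corrected to include the input-register projector and the signs are relocated from the PREPARE phases to the readout observable (or to an asymmetric left/right preparation), the construction as proposed does not output $(\hW\vartheta)_i$.
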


    The proof is given in Appendix~\ref{a:theorem_2}, which uses ideas reminiscent to the formalism of linear combinations of unitary operations~\cite{childs2012hamiltonian}.
    With Theorem~\ref{thm:finite_pqc}, we understand that PQCs can produce any labeling of arbitrary sets of quantum states, provided they fulfill our distinguishability condition.

    Notice that Definition~\ref{def:distinguish} is needed for the correctness of Theorem~\ref{thm:finite_pqc}.
    We require knowledge of an efficient classical description of the quantum states for two main reasons.
    On the one hand, PQCs are the object of our study. Hence, we need to prepare the approximation efficiently as a PQC.
    In addition, on the other hand, the distinguishability condition is also enough to prevent us from running into computation-complexity bottle-necks, like those arising from the distributed inner product estimation results in Ref.~\cite{anshu2022distributed}.    
    
\section{Discussion}
\label{sec:discussion}

   We next discuss the implications of our results and suggest research avenues to explore in the future.
   We have shown that quantum neural networks (QNNs) can fit random data, including randomized labels or quantum states.
   We provided a detailed explanation of how to place our findings in a statistical learning theory context.
   We do not claim that uniform generalization bounds are wrong or that any prior results are false.
   Instead, we show that the statements of theorems that fit our generic uniform template must be vacuous for the regimes where the models are able to fit a large fraction of random data.
    We have brought the randomization tests of Ref.~\cite{zhang2016understanding} to the quantum level.
    We have selected one of the most promising QML architectures for our experiments, known as the quantum convolutional neural network (QCNN).
    We have considered the task of classifying quantum phases of matter, which is a state-of-the-art application of QML.
   
   Our numerical results suggest that we must reach further than uniform generalization bounds to fully understand quantum machine learning (QML) models. In particular, experiments like ours immediately problematize approaches based on complexity measures like the 
    VC dimension, the 
    Rademacher complexity, and all their uniform relatives.
    To the best of our knowledge, essentially all generalization bounds derived for QML so far are of the uniform kind.
    Therefore, our findings highlight the need for a perspective shift in generalization for QML.
    In the future, it will be interesting to conduct causation experiments on QNNs using non-uniform generalization measures. Promising candidates for good generalization measures in QML include the time to convergence of the training procedure, the geometric sharpness of the minimum the algorithm converged to, and the robustness against noise in the data~\cite{jiang2019fantastic}.  

    The structure of the QCNN, with its equivariant and pooling layers, results in an ansatz with restricted expressivity.
    Its core features, including intermediate measurements, parameter-sharing, and logarithmic depth, make the QCNN a smaller model than other, deeper PQCs, like a brickwork ansatz with completely unrestricted parameters.
    In the language of traditional statistical learning, this translates to higher bias and lower variance.
    Consequently, for the same task, the QCNN tends towards underfitting, posing a greater challenge in achieving perfect fitting of the training set compared to a more expressive model.
    As a result, the QCNN is anticipated to exhibit better generalization behavior when compared to the usual hardware-efficient ans\"atze~\cite{kandala2017hardware}.
    The QCNN thus is assigned lower generalization bounds than other larger models, due to the higher variance of the latter.
    Therefore, our demonstration that uniform generalization bounds applied to the QCNN family are trivially loose immediately implies that the same bounds applied to less restricted models must also be vacuous.
    Stated differently, our findings for a small model, the QCNN, inherently apply to all larger models, including the hardware-efficient ansatz.
    Furthermore, our study adds to the evidence supporting the need for a proper understanding of symmetries and equivariance in QML~\cite{meyer2023exploiting, skolik2023equivariant, larocca2022group, schatzki2022theoretical}.
   
    In addition to our numerical experiments, we have analytically shown that polynomially-sized QNNs are able to fit arbitrary labeling of data sets.
    This seems to contradict claims that few training data are provably sufficient to guarantee good generalization in QML, raised e.g. in Ref.~\cite{caro2022generalization}.
    Our analytical and numerical results do not preclude the possibility of good generalization with few training data but rather indicate we cannot guarantee it with arguments based on uniform generalization bounds. The reasons why successful generalization might occur have yet to be discovered.

    We employ the rest of this section to comment on the significance of our randomization experiments, and also describe the parallelisms and differences between our work and the seminal Ref.~\cite{zhang2016understanding}, which served as the basis for our experimental design.
    In particular, two primary factors warrant consideration: the size of the model, and the size of the training set.
    In our learning task, quantum phase recognition problem for systems of up to $32$ qubits, we use training sets comprised of up to $20$ labeled pairs.
    In the following paragraphs we elucidate whether these should be considered large or small; capable of overfitting or memorizing; and whether the results of our experiments are due to finite sample size artifacts.

    Upon first glance, the training set sizes employed in our randomization experiments may seem relatively small.
    However, it is essential to consider the randomization study within its relevant context.
    As previously mentioned, good generalization performance has been reported in QML, particularly for classifying quantum phases of matter using a QCNN architecture~\cite{caro2022generalization}.
    At present, this combination of model and task is also among the best leading approaches concerning generalization within the QML literature.
    The key fact is that our randomization tests use the same training set sizes as the original experiments which reported good generalization performance.
    The question whether the randomization results are caused by the relative ease to find patterns that fit the given labels from the small set of data is ruled out by the fact that these small set sizes suffice to solve the original problem.
    If the QCNN were able to fit the random data only because of finite sample size artifacts, we would anticipate the expected risk and the generalization gap to be considerably large even for the original data.
    Given our observation of successful generalization for data sampled from the original distribution, we conclude that these training sets are not too small, but rather large enough.

    Both our study and Ref.~\cite{zhang2016understanding} have in common that the learning models considered were regarded as among the best in terms of generalization for state-of-the-art benchmark tasks.
    Also, the randomization experiments in both cases employed datasets taken from state-of-the-art experiments of the time.
    Yet, and in spite of the similarities, it is imperative to recognize that the learning models employed in these studies are fundamentally different.
    They not only operate on distinct computing platforms of a physically different nature, but also the functions produced by neural networks are typically different from those produced by parameterized quantum circuits.
    As a consequence, caution is warranted in expecting these two different learning models to behave equally when faced with randomization experiments based on unrelated learning tasks.
    The fact that the quantum and classical learning models display similar results should not be taken for granted.

    A key distinction lies in the notion of overparameterization, which plays a critical role in classical machine learning. It is important to distinguish the notion of overparameterization in classical ML from the recently introduced definition of overparameterization in QML~\cite{larocca2021theory}, which under the same name, deals with different concepts.
    The deep networks studied in Ref.~\cite{zhang2016understanding} have far more parameters than both the dimension of the input image and the training set size.
    This brings us to refer to these as large models.
    Conversely, we argue that the QCNN qualifies as a small model.
    Although the number of parameters in the considered architectures is larger than the size of the training sets, they exhibit a logarithmic scaling with the number of qubits.
    Meanwhile, the number of dimensions of the quantum states scales exponentially.
    Hence, it is inappropriate to categorize the models we have investigated as large in the same way as the classical models in Ref.~\cite{zhang2016understanding}.
    We find the ability of small quantum learning models to fit random data as unexpected, as witnessed by the many works on uniform generalization bounds for quantum models published during the aftermath of Ref.~\cite{zhang2016understanding}.
    This observation reveals a promising research direction: not only must we rethink our approach to studying generalization in QML, but we must also recognize that the mechanisms leading to successful generalization in QML may differ entirely from those in classical machine learning.
    On a higher level, this work exemplifies the necessity of establishing connections between the literature on classical machine learning and the evolving field of quantum machine learning.

\section{Methods}\label{s:methods}

\subsection{Numerical methods}\label{ss:num_methods}
This section provides a comprehensive description of our numerical experiments, including the computation techniques employed for the random and real label implementations, as well as the random state and partially-corrupted label implementations.

\textit{Random and real label implementations.} The test and training ground state vectors $\ket{\psi_i}$ of the cluster Hamiltonian in Eq.~\eqref{eq:hamiltonian} have been obtained as variational principles over matrix product states in a reading of the density matrix renormalization group ansatz ~\cite{white1992density} through the software package \texttt{Quimb}~\cite{gray2018quimb}. We have utilized the matrix product state backend from \texttt{TensorCircuit}~\cite{zhang2023tensorcircuit} to simulate the quantum circuits. In particular, a bond dimension of $\chi = 40$ was employed
for the simulations of 16- and 32-qubit QCNNs. We find that further increasing the bond dimension does not lead to any noticeable changes in our results.

\textit{Random state and partially-corrupted label implementations.} In this scenario, the test and training ground state vectors $\ket{\psi_i}$ were obtained directly diagonalizing the Hamiltonian. Note that our QCNN comprised a smaller number of qubits for these examples, namely, $n \in \{8, 10, 12\}$. The simulation of quantum circuits was performed using \texttt{Qibo}~\cite{qibo_paper}, a software framework that allows faster simulation of quantum circuits.

For all implementations, the training parameters were initialized randomly. The optimization method employed to update the parameters of the QCNN during training is the \texttt{CMA-ES}~\cite{hansen2019pycma}, a stochastic, derivative-free optimization strategy. The code generated under the current study is also available in Ref.~\cite{github_code}.

\subsection{Analytical methods}\label{ss:SDP}

    Here, we shed light on the practicalities of Definition~\ref{def:distinguish}, a requirement for our central Theorem~\ref{thm:finite_pqc}.
    Algorithm~\ref{alg:convex} allows for several approximation protocols to be combined to increase the chances of fulfilling the assumptions of Definition~\ref{def:distinguish}.
    Indeed, we can allow for the auxiliary states $\hrho_1,\ldots,\hrho_N$ to be linear combinations of several approximation states while staying in the mindset of Definition~\ref{def:distinguish}.
    Then, we can cast the problem of finding an optimal weighting for the linear combination as a linear optimization problem with a positive semi-definite constraint.

    With Theorem~\ref{thm:correctness}, we can assess the distinguishability condition of Definition~\ref{def:distinguish} for specific states $\rho_1,\ldots,\rho_N$ and specific approximation protocols.
    Theorem~\ref{thm:correctness} also considers the case where different approximation protocols are combined, which does not contradict the requirements of Theorem~\ref{thm:finite_pqc}.
    
   \begin{theorem}[Conditioning as a convex program~\ref{alg:convex}]\label{thm:correctness}
        Let $\rho_1,\ldots,\rho_N$ be unknown, linearly-independent quantum states on $n$ qubits, with $N\in\calO(\poly(n))$.
        For any $i\in[N]$, let $\sigma^i=(\sigma^i_1, \dots,  \sigma^i_m)$ be approximations of $\rho_i$, each of which can be efficiently prepared using a PQC.
        Assume the computation of $\Tr(\rho_i\sigma^j_k)$ in polynomial time for any choice of $i, j$ and $k$.
        Call $\sigma=(\sigma^1, \ldots, \sigma^N)$.
        The real numbers $\alpha = \left(\alpha_{i,k}\right)_{i\in[N], k\in[m]}\in\bbR^{Nm}$ 
        define the auxiliary states $\hrho_1, \ldots, \hrho_N$ as
        \begin{align}
            \hrho_i(\alpha;\sigma^i) &= \sum_{k=1}^m \alpha_{i,k} \sigma^k_i\,,
        \end{align}
        and the matrix of inner products $\hW(\alpha;\sigma)$ with entries
        \begin{align}
            \left[\hW(\alpha;\sigma)_{i,j}\right]_{i,j\in[N]} &:= \Tr\left(\rho_i\hrho_j(\alpha;\sigma^j)\right)\\
            &= \sum_{k=1}^m \alpha_{j,k} \Tr\left(\rho_i\sigma^j_k\right)\,.
        \end{align} 
        Then, $\lVert\hW(\alpha;\sigma)\rVert \leq N$.
        Further, one can then decide in polynomial time whether, given $\rho_1, \ldots, \rho_N$, $\sigma$, and $\kappa\in \bbR$, there exists a specification of $\alpha\in\bbR^{Nm}$ 
        such that $\hW(\alpha;\sigma)$ is well-conditioned in the sense that $\lVert\hW(\alpha;\sigma)^{-1}\rVert^{-1} \geq \kappa$.
        And, if there exists such a specification, a convex semi-definite problem ($\mathsf{SDP}$) outputs an instance of $\alpha\gets\mathsf{SDP}(\rho,\sigma,\kappa)$ for which $\hW$ is well-conditioned.
        If it exists, one can also find in polynomial
        time the  $\alpha$ with the smallest $\lVert\cdot\rVert_{l_1}$ or $\lVert\cdot\rVert_{l_2}$ norm.
    \end{theorem}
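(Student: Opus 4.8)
The plan is to read Theorem~\ref{thm:correctness} as a statement about a semidefinite feasibility/optimization problem in the variable $\alpha\in\bbR^{Nm}$, exploiting that \emph{every} quantity in sight is affine in $\alpha$. First I would dispose of the norm bound. Since each approximation $\sigma^j_k$ and each target $\rho_i$ is a quantum state, and since the admissible $\hat\rho_j$ are constrained to be quantum states, every entry $\hat W_{i,j}=\Tr(\rho_i\hat\rho_j)$ is the overlap of two density operators and hence lies in $[0,1]$. An $N\times N$ matrix with entries in $[0,1]$ obeys $\|\hat W\|\le\sqrt{\|\hat W\|_{1}\,\|\hat W\|_{\infty}}\le N$ (equivalently $\|\hat W\|\le\|\hat W\|_F\le N$), which is the first claim.

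Next I would set up the feasibility problem. The map $\alpha\mapsto\hat W(\alpha;\sigma)$ is affine because $\hat W_{i,j}=\sum_k\alpha_{j,k}\Tr(\rho_i\sigma^j_k)$ and the coefficients $\Tr(\rho_i\sigma^j_k)$ are, by assumption, computable in polynomial time; I would precompute these $N^2m$ numbers once. The validity constraints --- that each $\hat\rho_i=\sum_k\alpha_{i,k}\sigma^i_k$ be a density operator --- are convex in $\alpha$: the trace normalization $\sum_k\alpha_{i,k}=1$ is linear, and positivity $\hat\rho_i\succeq0$ is a linear matrix inequality. Restricting to convex weights $\alpha_{i,k}\ge0$ makes positivity automatic and keeps this block of constraints of polynomial size; allowing signed weights is what later makes $\ell_1$-minimization nontrivial, and is accommodated by the linear-combination-of-unitaries construction underlying Theorem~\ref{thm:finite_pqc}, which does not require $\hat\rho_i$ to be a bona fide state.

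The crux is the conditioning constraint $\|\hat W^{-1}\|^{-1}\ge\kappa$, i.e.\ $\sigma_{\min}(\hat W)\ge\kappa$, which is \emph{not} convex in $\hat W$ (for scalars it reads $|w|\ge\kappa$). The key manoeuvre I would use is to pass to the symmetric part $\hat W_{\mathrm{sym}}=\tfrac12(\hat W+\hat W\trans)$: for any unit vector $x$ one has $\|\hat W x\|\ge|x\trans\hat W x|=x\trans\hat W_{\mathrm{sym}}x\ge\lambda_{\min}(\hat W_{\mathrm{sym}})$, so that $\sigma_{\min}(\hat W)\ge\lambda_{\min}(\hat W_{\mathrm{sym}})$. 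Hence imposing the linear matrix inequality $\hat W_{\mathrm{sym}}(\alpha)\succeq\kappa\,\Id$ --- affine in $\alpha$ and living on the polynomially sized $N\times N$ space --- is a convex \emph{sufficient} condition for well-conditioning. Feasibility of the resulting spectrahedron (the conditioning LMI together with the validity constraints) is decidable, and a witness $\alpha$ extractable, in polynomial time by standard interior-point or ellipsoid methods; this gives the decision statement and the map $\alpha\gets\mathsf{SDP}(\rho,\sigma,\kappa)$. Finally, minimizing $\|\alpha\|_{\ell_2}$ (a second-order-cone, hence SDP-representable, objective) or $\|\alpha\|_{\ell_1}$ (linear after introducing the usual auxiliaries $|\alpha_{i,k}|\le t_{i,k}$) over this same convex set is again a polynomial-time convex program, proving the last claim.

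I expect the main obstacle to be precisely the non-convexity of the exact conditioning requirement, together with the honest caveat that the symmetric-part LMI is a relaxation: it certifies $\sigma_{\min}(\hat W)\ge\kappa$ but need not detect every $\alpha$ for which $\hat W$ is well-conditioned while its symmetric part is indefinite. I would argue this gap is immaterial for the intended use --- the approximations are chosen so that $\hat W$ is close to the genuine symmetric, positive-definite Gram matrix $W$ of Theorem~\ref{thm:finite_qc}, so its antisymmetric part is small and $\lambda_{\min}(\hat W_{\mathrm{sym}})$ tracks $\sigma_{\min}(\hat W)$ up to that perturbation --- and I would phrase the decision procedure relative to this certified notion. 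The secondary subtlety to handle carefully is the scale of positivity: enforcing $\hat\rho_i\succeq0$ literally is an LMI over $2^n$-dimensional operators, so to stay polynomial I would either restrict to convex weights (making positivity automatic and the program genuinely of polynomial size) or rely on the LCU realization that sidesteps the need for $\hat\rho_i$ to be positive at all.
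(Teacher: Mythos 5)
Your proposal is correct, and its core is the same as the paper's: precompute the $N^2m$ overlaps $\Tr(\rho_i\sigma^j_k)$, note that $\alpha\mapsto\hW(\alpha;\sigma)$ is affine, encode the conditioning requirement as a linear matrix inequality on the polynomially sized $N\times N$ matrix $\hW$, decide feasibility by interior-point methods, and treat the $\lVert\cdot\rVert_{l_1}$ and $\lVert\cdot\rVert_{l_2}$ minimizations as standard convex objectives over the same spectrahedron. Where you deviate, you are in fact more careful than the paper. For the norm bound the paper invokes Gershgorin's circle theorem on the entrywise-bounded $\hW$, whereas your bound $\lVert\hW\rVert\le\lVert\hW\rVert_F\le N$ is preferable, since Gershgorin controls eigenvalues rather than singular values and $\hW$ need not be normal. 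More substantively, the paper imposes $\kappa\Id\le\hW\le N\Id$ directly, which tacitly treats $\hW$ as symmetric; this is false in general because $\Tr(\rho_i\hrho_j)\neq\Tr(\rho_j\hrho_i)$, so your passage to the symmetric part via $\sigma_{\min}(\hW)\ge\lambda_{\min}\bigl(\tfrac12(\hW+\hW\trans)\bigr)$ is the honest reading of that constraint, and your caveat is exactly right: the LMI is sufficient but not necessary for $\lVert\hW^{-1}\rVert^{-1}\ge\kappa$, so the "decide in polynomial time whether there exists $\alpha$" claim is proven only for this certified, relaxed notion --- a gap present in the paper's own proof as well, just unacknowledged there. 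One place where you add structure the paper does not: the paper never constrains the auxiliary $\hrho_i$ to be density operators; $\alpha$ ranges over unconstrained signed reals, and negative weights are absorbed later by the sign-flip construction in the proof of Theorem~\ref{thm:finite_pqc}. Your observation that literal positivity $\hrho_i\succeq 0$ would be a $2^n$-dimensional LMI is correct, and dropping it (as the paper does) is the simpler route; but note that the entrywise bound $\hW_{i,j}\in[0,1]$ underlying the norm claim then also needs justification, a tension your write-up makes explicit and the paper leaves silent.
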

    
    \begin{proof}
        The inequality $\lVert\hW(\alpha;\sigma)\rVert \leq N$ follows from Gershgorin's circle theorem~\cite{gershgorin1931uber}, given that all entries of $\hW$ are bounded between $[0,1]$.
        In particular, the largest singular value of the matrix $\hW$ reaches the value $N$ when all entries are $1$.
    
        The expression
        \begin{align}
                \hW_{i,j } &= 
                \sum_{k=1}^m \alpha_{j,k} \Tr\left(\rho_i\sigma^j_k\right).
        \end{align}
            is a linear constraint on
            $\alpha$ and $ \hW$,
            for $i,j\in[N]$,
            while 
        \begin{align}
            \kappa \id \leq 
            \hW\leq N\id
        \end{align}
        in matrix ordering is a positive semi-definite constraint. 
        $\hW\leq N\id$ is equivalent with
        $\lVert\hW\rVert\leq N$, while  $\kappa \id \leq 
        \hW$ means that the smallest singular value of $\hW$ is lower bounded by $\kappa$, being equivalent with
        \begin{align}
        \lVert\hW(\alpha;\sigma)^{-1}\rVert^{-1} \leq \kappa\,,
        \end{align}
        for an invertible $\hW(\alpha;\sigma)$.
        The test whether such a $\hW$ is well-conditioned hence takes
        the form of a semi-definite feasibility problem
        \cite{Boyd2004}. One can additionally minimize
        the objective functions 
        \begin{equation}
        \alpha\mapsto \lVert\alpha\rVert_{l_1}
        \end{equation}
        and 
        \begin{equation}
        \alpha\mapsto \lVert\alpha\rVert_{l_2}\,, 
        \end{equation}
        both again 
        as linear or convex quadratic and hence semi-definite problems. 
        Overall,
        the problem can be solved as a semi-definite problem,
        that can be solved in a run-time with low-order
        polynomial effort with interior point methods. 
        Duality theory readily provides a rigorous
        certificate for the
        solution \cite{Boyd2004}.
    \end{proof}

    In the proof, we refrain from explicitly specifying the definition of $\sigma$ in relation to the original states $\rho_1,\ldots,\rho_N$.
    Indeed, the success criterion is that the resulting matrix $\hW$ is well-conditioned.
    As a sufficient condition, we could have required both that the Gram matrix $W_{i,j}=\Tr(\rho_i\rho_j)$ is well-conditioned, and that for each $i\in[N]$ there is at least one $k\in[m]$ such that $\sigma^k_i$ is close to $\rho_i$ for some distance metric.
    Under this condition, we would expect $\hW$ to be well-conditioned.
    Nonetheless, this condition is not necessary.
    In general, it is plausible that each of the states $\hrho_i$ constructed from $\sigma$ are not close to each of the original states $\rho_i$, resulting in $\hW$ not being close to $W$, while $\hW$ still being well-conditioned.
    In this situation, the construction from Theorem~\ref{thm:finite_pqc} still holds.

    We propose using Algorithm~\ref{alg:convex} to construct the optimal auxiliary states $\hrho_1,\ldots,\hrho_N$, given the unknown input states $\rho_1,\ldots,\rho_N$ and a collection of available approximation protocols $A_1,\ldots,A_m$.
    The algorithm produces an output of either $0$ in cases where no combination of the approximation states satisfies the distinguishability condition, or it provides the weights $\alpha$ necessary to construct the auxiliary states as a sum of approximation states.
    In Theorem~\ref{thm:correctness}, we prove the correctness of the algorithm.

    \begin{algorithm}[H]
        \caption{Convex optimization state approximation}
        \label{alg:convex}
        \begin{algorithmic}[1]
            \Require
            \State $\rho=(\rho_1,\ldots,\rho_N)$ \Comment Quantum states
            \State $A=(A_1, \ldots, A_m)$ \Comment State approximation algorithms
            \State $\kappa$ \Comment Condition number
            \Ensure $\alpha$ such that $\hW$ is well-conditioned if possible, $0$ otherwise.
            \State
            \For {$i\in[N], k\in[m]$}
                \State $\sigma^i_k\gets A_k(\rho_i)$
            \EndFor
            \State
            \State $\sigma\gets(\sigma^i_k)_{i\in[N],k\in[m]}$
            \State
            \State $\alpha\gets\mathsf{SDP}(\rho, \sigma, \kappa)$ \Comment From proof of Theorem~\ref{thm:correctness}
            \State
            \If {$\mathsf{SDP}$ fails}
                \State \Return $0$ \Comment No suitable $\alpha$ found
                \State
            \Else
                \State \Return $\alpha$ \Comment $\hW$ well-conditioned
            \EndIf
        \end{algorithmic}
    \end{algorithm}

    The construction of $\sigma$ from the input states $\rho_1,\ldots,\rho_N$ plays an intuitive role in the success of Algorithm~\ref{alg:convex}.
    Let us consider two scenarios.
    First, we assume the Gram matrix of the initial states $W$ is well-conditioned, and that for each $i\in[N]$ there is at least one $k\in[m]$ such that $\rho_i=\sigma^k_i$.
    In this instance, there exists at least one specification of real values $\alpha$ for which $\hW$ is well-conditioned.
    It suffices to set $\alpha_{j,k}=\delta_{j,k}$, the latter denoting the Kronecker delta.
    This guarantees that Algorithm~\ref{alg:convex} outputs a satisfactory $\alpha$ (potentially of minimal norm) in polynomial time.
    Conversely, we now consider a scenario where the approximation protocols employed to construct $\sigma$ all yield failures, resulting in $\sigma^k_i=\lvert0\rangle\!\langle0\rvert$ for all $i\in[N]$ and $k\in[m]$.
    In this case, there is no choice of $\alpha$ for which $\hW$ is well conditioned and Algorithm~\ref{alg:convex} necessarily outputs $0$, also within polynomial time.

    We refer to the proof of Theorem~\ref{thm:finite_pqc}, in Appendix~\ref{a:theorem_2}, for an explanation of how to construct the intermediate states $\hrho_i$ as a linear combination of auxiliary states $\sigma^i$ without giving up the PQC framework.

\section*{Code and data availability}
The code and data used in this study are available in the Zenodo database in Ref.~\cite{github_code}.

\acknowledgments
The authors would like to thank Matthias C.~Caro, Vedran Dunjko, Johannes Jakob Meyer, and Ryan Sweke for useful comments on an earlier version of this manuscript and Christian Bertoni, Jos\'e Carrasco, and Sofiene Jerbi for insightful discussions. 
The authors also acknowledge the BMBF (MUNIQC-Atoms, Hybrid), the BMWK (EniQmA, PlanQK), the QuantERA (HQCC), the Quantum Flagship (PasQuans2),
the MATH+ Cluster of Excellence,
the DFG (CRC 183, B01), the 
Einstein Foundation 
(Einstein Research Unit on Quantum Devices),
and the ERC (DebuQC) for financial support. 

\section*{Author contributions}
    The project has been conceived by C.~B.-P.
    Experimental design has been laid out by E.~G.-F.
    Analytical results have been proven by E.~G.-F. and J.~E.
    Numerical experiments have been performed by C.~B.-P.
    The project has been supervised by C.~B.-P.
    All authors contributed to writing the manuscript.

\bibliography{citations.bib}

\vspace*{0.5cm}

\onecolumngrid
\appendix

\newpage

\setcounter{figure}{0}
\setcounter{theorem}{0}

\begin{center}
\large{Supplementary Material for \\ ``Understanding quantum machine learning also requires rethinking generalization''
}
\end{center}

\counterwithin{figure}{section}

\section{Proof of Theorem 1}\label{a:theorem_1}

    In this section, we re-state and prove Theorem~\ref{thm:finite_qc}.

    \begin{theorem}[Finite sample expressivity of quantum circuits]Let $\rho_1,\ldots,\rho_N$ be unknown quantum states on $n\in\bbN$ qubits, with $N\in\calO(\poly(n))$, and let $W$ be the Gram matrix
        \begin{align}
            [W]_{i,j} &= \Tr(\rho_i\rho_j)\,.
        \end{align}
        If $W$ is well-conditioned, then, for any $y_1,\ldots,y_N\in\bbR$ real numbers, we can construct a quantum circuit of depth $\poly(n)$ as an observable $\calM_y$ such that
        \begin{align}
            \Tr(\rho_i\calM_y) &= y_i\,.
        \end{align}
    \end{theorem}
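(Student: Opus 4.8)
The plan is to build $\calM_y$ as a real linear combination of the input states themselves. Writing
\begin{equation}
    \calM_y = \sum_{j=1}^N c_j\,\rho_j
\end{equation}
for coefficients $c = (c_1,\ldots,c_N)\in\bbR^N$ still to be determined, the defining condition unfolds by linearity of the trace into
\begin{equation}
    \Tr(\rho_i\calM_y) = \sum_{j=1}^N c_j\,\Tr(\rho_i\rho_j) = \sum_{j=1}^N [W]_{i,j}\,c_j = (Wc)_i \overset{!}{=} y_i\,,
\end{equation}
that is, the single matrix equation $Wc = y$. The entire construction thus collapses onto inverting the Gram matrix, and $\calM_y$ is automatically Hermitian as a real-weighted sum of Hermitian operators.

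First I would invoke the well-conditioning hypothesis. Since $W$ is well-conditioned it is in particular invertible, so $c = W^{-1}y$ exists and is unique, and the resulting $\calM_y$ satisfies $\Tr(\rho_i\calM_y)=y_i$ exactly. Well-conditioning does more than guarantee invertibility: if the smallest singular value of $W$ is bounded below by some $\kappa>0$ (the quantitative form used later in Theorem~\ref{thm:correctness}), then $\norm{c}\le\norm{W^{-1}}\,\norm{y}\le\norm{y}/\kappa$, so the coefficients — and hence the eventual measurement overhead — stay polynomially controlled rather than blowing up. This is precisely the role the assumption plays beyond mere nonsingularity.

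Next I would address what it means to realize $\calM_y$ as a depth-$\poly(n)$ circuit, given that the $\rho_j$ are unknown and available only through copies. The key observation is that a classical description of $\calM_y$ is never needed: to read off the label of a query state we only require its expectation value, and by the displayed linearity it suffices to estimate each overlap $\Tr(\rho_i\rho_j)$ and recombine with the known weights $c_j$ in classical post-processing. Each overlap is exactly what the $\SWAP$ test extracts from one copy each of $\rho_i$ and $\rho_j$: prepare an ancilla in $\ket{+}$, apply a controlled-$\SWAP$ across the two $n$-qubit registers, and measure the ancilla in the $X$ basis, whose bias encodes $\Tr(\rho_i\rho_j)$. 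The controlled-$\SWAP$ decomposes into $n$ Fredkin gates on matched qubit pairs, yielding $\poly(n)$ depth, and there are only $N\in\calO(\poly(n))$ terms, so the overall resources remain polynomial.

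The main obstacle I anticipate is conceptual rather than computational: making precise in what sense this family of $\SWAP$-test circuits \emph{is} the observable $\calM_y$, since as written $\calM_y$ is a linear combination of density matrices with possibly negative coefficients and is not itself a POVM element nor a directly implementable projective measurement. The resolution is to treat $\calM_y$ purely through its expectation values, which the linear-combination-of-overlaps estimator reproduces exactly in the ideal (infinite-sample) limit, with the equality $\Tr(\rho_i\calM_y)=y_i$ holding as an exact algebraic identity. I would also flag explicitly that this scheme requires repeated query access to fresh copies of the states at evaluation time and leans on a $\SWAP$-test primitive rather than a standard variational ansatz — exactly the limitation that motivates the PQC-native reformulation of Theorem~\ref{thm:finite_pqc}.
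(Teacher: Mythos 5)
Your proposal is correct and follows essentially the same route as the paper's proof: write $\calM_y=\sum_j c_j\rho_j$, solve the linear system $Wc=y$ using well-conditioning (the paper calls the solution $z$), and realize the expectation values operationally via $\SWAP$-test overlap estimates combined with the known weights. The only cosmetic difference is that the paper estimates $\Tr(\rho_i\calM_y)$ by importance-sampling the index $k$ with probability proportional to $\lvert z_k\rvert$ (tracking signs and rescaling by the $1$-norm of $z$) rather than estimating all $N$ overlaps separately as you do, and your explicit bound $\norm{c}\leq\norm{y}/\kappa$ is a welcome quantitative remark the paper leaves implicit.
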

    \begin{proof}
        We prove the statement directly and constructively.
        It is interesting to note that if we had imposed \emph{pairwise orthonormal} states instead of just linearly independent, the Gram matrix would simply be the identity $W=\bbI$, in which case we would only need to set
        \begin{align}
            \tcalM_y &\coloneqq \sum_{k=1}^N y_k \rho_k.
        \end{align}
        We would have, for each $i\in\{1,\ldots,N\}$,
        \begin{align}
            \Tr(\rho_i\tcalM_y) &= \Tr\left(\rho_i \sum_{k=1}^N y_k \rho_k\right) 
            = \sum_{k=1}^N y_k \Tr(\rho_i\rho_k) \\
            &= \sum_{k=1}^N y_k \delta_{i,k} = y_i\,.
        \end{align}
        In the second-to-last step, we used the pairwise orthogonality and normalization of the quantum states, which we briefly introduced for illustration purposes.
        
        Now, if we go back to only requiring a well-conditioned Gram matrix, we need to introduce the intermediate variable $z=(z_1,\ldots, z_N)$, which we define as the solution to the linear system of equations
        \begin{align}
            Wz &= y\,.
        \end{align}
        We know we can solve this system of linear equations, reaching a unique solution, because $W$ is well-conditioned per hypothesis.
        With this, we take the same observable as before, but this time with the intermediate variable weighting the sum over states,
        to get
        \begin{align}
            \calM_y = \tcalM_z \coloneqq \sum_{k=1}^N z_k \rho_k\,.
        \end{align}
        Indeed, this observable produces the correct output for each $i\in[N]$,
        \begin{align}
            \Tr(\rho_i\tcalM_z) &= \Tr\left(\rho_i\sum_{k=1}^N z_k\rho_k\right) = \sum_{k=1} z_k \Tr(\rho_i\rho_k) \\
            &= \sum_{k=1}^N z_k W_{i,k}  = y_i\,.
        \end{align}
        In order to compute $\Tr(\rho_i\rho_k)$ for unknown states $\rho_i$ and $\rho_k$, we could employ the $\SWAP$ test, using one auxiliary qubit.
        
        Notice that even though this construction is theoretical and assumes access to many copies of the input states every time we want to run the quantum circuit, the observable's expected outcome can still be estimated in practice following these steps:
        \begin{enumerate}
            \item From $y$ and $W$, obtain $z$.
            \item From $z$, obtain the probability distribution $p=(p_1, \ldots, p_N)$ as 
                \begin{align}
                    p_k &= \frac{\lvert z_k\rvert}{\sum_{j=1}^N \lvert z_j\rvert}\,,
                \end{align}
                and the vector of signs $s=(s_1, \ldots, s_N)$ as
                \begin{align}
                    s_k=\frac{\lvert z_k\rvert}{z_k}\,,
                \end{align}
                so that $z_k = s_k p_k \sum_{j=1}^N \lvert z_j\rvert$.
            \item Sample $k\sim (p_1, \ldots, p_N)$ and prepare $\rho_k$.
            \item Estimate $\Tr(\rho_i\rho_k)$ for the desired $\rho_i$ and the $\rho_k$ just sampled, for instance using the $\SWAP$ test.
            \item If $s_k=-1$, flip the outcome in the last step.
            \item Repeat the last three steps until convergence.
            \item Output the expected value multiplied with $\sum_{j=1}^N \lvert z_j\rvert$.
        \end{enumerate}

        This procedure realizes an unbiased estimator for the expectation value of $\tcalM_z$ with error decreasing linearly with the number of repetitions.
        With this, the proof is complete.
    \end{proof}

\section{Proof of Theorem 2}\label{a:theorem_2}

    Here, we re-state and prove Theorem~\ref{thm:finite_pqc}.
    
\begin{theorem}[Finite sample expressivity of PQCs]
        Let $\rho_1, \ldots, \rho_N$ be unknown quantum states on $n\in\bbN$ qubits, with $N\in\calO(\poly(n))$, and fulfilling the distinguishability condition of Definition~\ref{def:distinguish}.
        Then, we can construct a PQC of $\poly(n)$ depth as a parametrized observable $\hcalM(\vartheta)$ such that, for any $y=(y_1,\ldots, y_N)\in\bbR$ real numbers, we can efficiently find a specification of the parameters $\vartheta_y$ such that
        \begin{align}
            \Tr(\rho_i\hcalM(\vartheta_y)) &= y_i\,.
        \end{align}
    \end{theorem}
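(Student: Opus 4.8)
The plan is to re-run the linear-algebraic argument of Theorem~\ref{thm:finite_qc}, but with the auxiliary states $\hrho_k$ supplied by the distinguishability condition (Definition~\ref{def:distinguish}) playing the role previously held by the $\rho_k$, and then to upgrade the resulting ``weighted sum of states'' from an abstract observable into a genuine bounded-depth PQC measurement by means of a linear-combination-of-unitaries (LCU) gadget. First I would fix the circuit template by defining the parameterized observable
\begin{align}
    \hcalM(\vartheta) &= \sum_{k=1}^N \vartheta_k\,\hrho_k\,,
\end{align}
so that the index register, the controlled state-preparations, and the final measurement are all determined by the states alone, while only the real parameters $\vartheta=(\vartheta_1,\ldots,\vartheta_N)$ depend on the target labels $y$.

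The second step is to reduce the label-fitting requirement to a well-conditioned linear system. Using $\hW_{i,j}=\Tr(\rho_i\hrho_j)$ from Definition~\ref{def:distinguish}, linearity of the trace gives
\begin{align}
    \Tr\!\left(\rho_i\hcalM(\vartheta)\right) &= \sum_{k=1}^N \vartheta_k\,\Tr(\rho_i\hrho_k) = \sum_{k=1}^N \hW_{i,k}\,\vartheta_k = (\hW\vartheta)_i\,.
\end{align}
Demanding $(\hW\vartheta)_i=y_i$ for all $i$ is exactly the condition $\hW\vartheta=y$, and since $\hW$ is well-conditioned (hence invertible) the unique solution $\vartheta_y=\hW^{-1}y$ can be found in polynomial time, which establishes the ``efficiently find a specification of the parameters'' clause. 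Well-conditioning also bounds $\norm{\vartheta_y}_1\le\norm{\hW^{-1}}\,\norm{y}_1$ polynomially; I would flag this bound now because it is what later controls the LCU normalization.

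The third and decisive step is to realize $\hcalM(\vartheta_y)=\sum_k \vartheta_{y,k}\hrho_k$ as a $\poly(n)$-depth PQC observable, \emph{without} the copy-access/$\SWAP$-test route of Theorem~\ref{thm:finite_qc}. Each $\hrho_k$ is, by hypothesis, prepared by a poly-depth PQC $V_k$ as a purification $V_k\ket{0}$ on the system plus an ancilla register, which yields a block-encoding of the subnormalized operator $\hrho_k$. I would then combine these via LCU: set $\lambda=\norm{\vartheta_y}_1$, $p_k=\lvert\vartheta_{y,k}\rvert/\lambda$, $s_k=\sign(\vartheta_{y,k})$; load a $\lceil\log N\rceil$-qubit PREPARE register with amplitudes $\sqrt{p_k}$ and signs $s_k$, apply a SELECT that controls the $V_k$ block-encodings on this register, and UNPREPARE, heralding on the ancillas being in $\ket{0}$. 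The heralded operator is $\hcalM/\lambda$, so the target expectation is recovered as $\Tr(\rho_i\hcalM)=\lambda\cdot\Tr\!\big(\rho_i(\hcalM/\lambda)\big)$. Because $N\in\calO(\poly(n))$, the index register is $\calO(\log n)$ qubits, PREPARE is a poly-size state preparation, and SELECT is a sequence of $N$ controlled poly-depth unitaries, so the whole circuit has $\poly(n)$ depth; only the PREPARE amplitudes carry the $y$-dependence, matching the statement.

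The hard part will be precisely this third step. The obstacle is threefold: encoding the \emph{signs} of the generally negative coefficients $\vartheta_{y,k}$ (handled by phases in PREPARE), block-encoding each \emph{mixed} auxiliary state $\hrho_k$ from its purifying PQC $V_k$ rather than from a pure-state oracle, and keeping the LCU normalization $\lambda=\norm{\vartheta_y}_1$ polynomially bounded so that the heralded success probability stays inverse-polynomial and the expectation remains efficiently estimable. This last point is exactly where the well-conditioning of $\hW$ (and thus Theorem~\ref{thm:correctness} together with Algorithm~\ref{alg:convex}) is indispensable, since it is what prevents $\norm{\hW^{-1}y}_1$ from blowing up. The remaining verifications—correctness of the block-encoding identities and the standard LCU success analysis—are routine once the gadget is in place.
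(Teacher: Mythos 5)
Your proposal is correct and follows essentially the same route as the paper's proof: solve the well-conditioned linear system $\hW z = y$ supplied by Definition~\ref{def:distinguish}, realize $\hcalM(\vartheta_y)=\sum_k z_k\hrho_k$ via an LCU-style gadget with a $\lceil\log_2 N\rceil$-qubit index register carrying amplitudes $\sqrt{\lvert z_k\rvert/\lVert z\rVert_{l_1}}$ and controlled state-preparation unitaries, and recover $\Tr(\rho_i\hcalM(\vartheta_y))$ after rescaling by $\lVert z\rVert_{l_1}$. The only deviations are cosmetic: the paper encodes the signs in a diagonal measurement $\sum_j \hs_j\lvert j\rangle\!\langle j\rvert$ on the index register instead of your phases-in-PREPARE with unprepare-and-herald, and it takes each $\hrho_k=U_k\lvert0\rangle\!\langle0\rvert U_k\dagg$ to be prepared directly as a pure state rather than going through your purification-based block-encodings.
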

    \begin{proof}

        This proof follows the steps of Theorem~\ref{thm:finite_qc}. We show the statement directly and constructively.
        Now, instead of using the quantum states themselves, we shall first find easy-to-prepare approximations and then define the observable based on the latter.

        By construction, $\rho_1,\ldots,\rho_N$ fulfill the distinguishability condition.
        That means we can obtain efficient PQC-based approximations $\hrho_1,\ldots,\hrho_N$, with which we can furbish the matrix $\hW$, with entries
        \begin{align}
            [\hW]_{i,j} &= \Tr(\rho_i\hrho_j)\,.
        \end{align}
        Moreover, we can efficiently prepare $\hrho_j$ by applying a now-known parametrized unitary, $U_j$, to, e.g., 
        the $\lvert0\rangle$ state vector,
        \begin{align}
            \hrho_j &= U_j \lvert0\rangle\!\langle0\rvert U_j\dagg\,.
        \end{align}
        This means we do not require using the $\SWAP$ test every time anymore, nor continuous coherent access to the input quantum states, since we can now apply the inverse unitary to $\rho_i$ and then measure in the computational basis,
        to get
        \begin{align}
            \Tr(\rho_i\hrho_j) &= \Tr\left(\rho_i U_j \lvert0\rangle\!\langle0\rvert U_j\dagg\right) = \langle0\rvert U_j\dagg\rho_i U_j \lvert0\rangle\,.
        \end{align}
        In the case of $\rho_i$ being a pure state $\rho_i=\lvert\psi_i\rangle\!\langle\psi_i\rvert$, the inner product is just $\lvert\langle0\rvert U_j\dagg\lvert\psi_i\rangle\rvert^2$.
        This can be clearly computed as a PQC.

        We repeat this approach for each $(\rho_i,\hrho_j)$-pair, until we fill the matrix $\hW$.
        We can finally obtain the intermediate variable $\hz$, now as the solution to the linear system with $\hW$ and $y$,
        \begin{align}
            \sum_{k=1}^N \hW_{i,k} \hz_k &= y_i\,.
        \end{align}
        Again, we can find a unique solution to this linear system because of the well-conditioned requirement of $\hW$ in Def.~\ref{def:distinguish}.
        It is then sufficient to construct the measurement observable $\hcalM(\vartheta_y)$ as
        \begin{align}
            \hcalM(\vartheta_y) &= \sum_{k=1}^N \hz_k \hrho_k\,.
        \end{align}
        We clear the relation between $\vartheta_y$ and $\hz$ further below, which is crucial in this construction.
        
        The fact that the construction produces the correct outcome for each $i\in[N]$ should not be surprising by now
        \begin{align}
            \Tr(\rho_i\hcalM(\vartheta_y)) &= \Tr\left(\rho_i\sum_{k=1}^N \hz_k\hrho_k\right) = \sum_{k=1}^N \hz_k \Tr(\rho_i\hrho_k) \\
            &= \sum_{k=1}^N \hz_k \hW_{i,k} = y_i\,.
        \end{align}
        The question remains whether we can implement $\hcalM(\vartheta_y)$ as a PQC since, so far, we have only stated each of the approximated states $\hrho_j$ can individually be prepared as a PQC.
        Indeed, we could use classical controls to prepare each of the $U_j$ state-preparation-unitaries, along with extra $\log(N)$ auxiliary qubits.

        Suppose we wanted to stay in the spirit of Theorem~\ref{thm:finite_qc}. We could first construct a classical probability distribution from $\hz$ and then sample and prepare the $\hrho_k$ with probability proportional to $\lvert \hz_k\rvert$, keeping track of potential sign flips.
        However, potentially, for each different $\hrho_k$, we would need to run a different circuit $U_k$.
        This would give up the PQC picture slightly.
        Hence we need to reconcile this sampling from the $\hz$-probability distribution as part of the PQC itself, for which we shall use a few extra qubits.

        As before, from $\hz=(\hz_1, \ldots, \hz_N)$, we construct the probability distribution $\hp$, and a vector of signs $\hs$, $\hs_j=\sign(\hz_j)$, with
                \begin{align}
                    \hp_k &= \frac{\lvert \hz_k\rvert}{\sum_{j=1}^N \lvert \hz_j\rvert}\propto\lvert\hz_k\rvert,
                \end{align}
        so that it holds $\hz_k = \hs_k \hp_k \sum_{j=1}^N \lvert \hz_j\rvert$.
        We initialize a quantum circuit with two registers, the $n$-qubit input register, and a $\lceil\log_2(N)\rceil$-qubit auxiliary register.
        First, we perform amplitude encoding $V(\vartheta^{(1)}_y)$ for the distribution $\hp$ on the auxiliary register, to get
        \begin{align}
            V(\vartheta^{(1)}_y) \lvert 0\rangle = \sum_{j=1}^N \sqrt{\hp_j}\lvert j\rangle\,.
        \end{align}
        For amplitude encoding, we consider the arbitrary state preparation protocol proposed in Ref.~\cite{mottonen2004transformation}.
        There, with a fixed circuit structure, we find the mapping from the amplitudes to be encoded to the rotation angles to be used in the parametrized circuit.
        With our notation, this mapping corresponds to $\hp\mapsto\vartheta^{(1)}_y$.
        The protocol in~\cite{mottonen2004transformation} is efficient in the length of the vector to be encoded.
        For fixed $\rho_1,\ldots,\rho_N$, the probabilities $\hp$ depend only on $y$, hence the explicit $y$-dependence of $\vartheta^{(1)}_y$.
        We call these parameters $\vartheta^{(1)}_y$ because they are not the only variational parameters that come into play.
        
        Next, for each $k\in[N]$, we construct the controlled rotation $\operatorname{CU}_k$ which, if the auxiliary register is in state $\lvert k\rangle$, implements the inverse of the $k^\text{th}$ state-preparation-unitary $U_k\dagg$ on the input register, and does nothing if the auxiliary register is in a different state
        \begin{align}
            \operatorname{CU_k}\left[\rho\otimes \lvert k'\rangle\!\langle k'\rvert\right] &\coloneqq \begin{cases}
                U_k\dagg \rho U_k \otimes \lvert k\rangle\!\langle k\rvert & \text{if } k=k' \\
                \rho\otimes \lvert k'\rangle\!\langle k'\rvert & \text{else\,.}
            \end{cases}
        \end{align}
        We call $\operatorname{CU}$ the sequence of all such controlled gates $\operatorname{CU}\coloneqq\operatorname{CU}_N\ldots\operatorname{CU}_1$.
        With this, if the auxiliary register is in a single computational-basis state vector $\lvert j\rangle$, the effect of $\operatorname{CU}$ is
        \begin{align}
            \operatorname{CU}\left[\rho\otimes\lvert j\rangle\!\langle j\rvert\right] &= U_j\dagg\rho U_j\otimes\lvert j\rangle\!\langle j\rvert\,.
        \end{align}
        After all the controlled rotations, we perform a product measurement on both registers $\calO(\vartheta_y^{(2)}) = \calO_\text{input}\otimes\calO_\text{aux}(\vartheta^{(2)}_y)$.
        On the input register, we measure the projector onto the $\lvert 0 \rangle$ state $\calO_\text{input}=\lvert0\rangle\!\langle0\rvert$.
        On the auxiliary register, we perform a computational basis measurement with sign flips according to the sign vector $\hs$ to arrive at
        \begin{align}
            \operatorname{\calO_\text{aux}(\vartheta^{(2)}_y)} &= \sum_{j=1}^N \hs_j \lvert j\rangle\!\langle j\rvert\,.
        \end{align}
        Again, $\hs$ depends on $y$, so now the variational parameters $\vartheta^{(2)}_y$ are the ones controlling whether the $j^\text{th}$ computational basis state has a sign-flip $s_j=-1$ or not $s_j=0$.
        We can again fix a circuit structure such that each specification of $\hs$ corresponds only to altering the variational parameters $\vartheta^{(2)}_y$.
        we shall denote $\vartheta_y=(\vartheta^{(1)}_y;\vartheta^{(2)}_y)$, uniting both kinds of variational parameters.
        
        Measuring the combined observable on the system while the auxiliary state is in the computational basis state $\lvert k\rangle$ produces the outcome
        \begin{align}
            \Tr\left[\rho\otimes\lvert k\rangle\!\langle k\rvert \calO(\vartheta^{(2)})\right] &= \Tr\left[\rho\calO_\text{input}\right]\Tr\left[\lvert k\rangle\!\langle k\rvert \calO_\text{aux}(\vartheta^{(2)})\right] = \langle 0\rvert\rho\lvert0\rangle \hs_k\,.
        \end{align}
        Finally, we put everything together to prove the correctness of the construction.

        For a given $\rho_i$, our task is to measure $\Tr [\rho_i\hcalM(\vartheta_y^{(2)})]$.
        The computation starts with the input register initialized as $\rho_i$ and the auxiliary register on the $\lvert0\rangle$ state.
        First, we perform amplitude encoding $V(\vartheta^{(1)})$ on the auxiliary register and then apply the controlled rotation unitary $\operatorname{CU}$,
        to arrive at
        \begin{align}
            \operatorname{CU}\left[\rho_i\otimes V(\vartheta^{(1)})\lvert0\rangle\!\langle0\rvert V\dagg(\vartheta^{(1)})\right] &= \sum_{j,j'}\sqrt{\hp_j\hp_{j'}}\operatorname{CU}\left[\rho_i\otimes \lvert j\rangle\!\langle j'\rvert\right] = \sum_{j,j'} \sqrt{\hp_j\hp_{j'}}U_j\dagg\rho_i U_{j'}\otimes\lvert j\rangle\!\langle j'\rvert\,.
        \end{align}
        Second and last, we measure $\calO(\vartheta^{(2)})$ on this state, to get
        \begin{align}
        \Tr\left[\sum_{j,j'} \sqrt{\hp_j\hp_{j'}}U_j\dagg\rho_i U_{j'}\otimes\lvert j\rangle\!\langle j'\rvert \calO(\vartheta^{(2)})\right] &= \sum_{j,j'} \sqrt{\hp_j\hp_{j'}}\Tr\left[U_j\dagg\rho_i U_{j'}\otimes\lvert j\rangle\!\langle j'\rvert \calO_\text{input}\otimes\calO_\text{aux}(\vartheta^{(2)})\right] \\
            &= \sum_{j,j'} \sqrt{\hp_j\hp_{j'}}\Tr\left[U_j\dagg\rho_i U_{j'}\calO_\text{input}\right]\Tr\left[\lvert j\rangle\!\langle j'\rvert \calO_\text{aux}(\vartheta^{(2)})\right]\,.
        \end{align}
        We only need to
        rearrange
        the formulas to get our original statement out, up to a multiplicative factor of $\sum_k \lvert \hz_k\rvert$, 
        to get
        \begin{align}
            \sum_{j,j'} \sqrt{\hp_j\hp_{j'}}\Tr\left[U_j\dagg\rho_i U_{j'}\calO_\text{input}\right]\Tr\left[\lvert j\rangle\!\langle j'\rvert \calO_\text{aux}(\vartheta^{(2)})\right] &= \sum_{j,j'} \sqrt{\hp_j\hp_{j'}}\Tr\left[U_j\dagg\rho_i U_{j'}\lvert 0\rangle\!\langle 0 \rvert\right]\Tr\left[\lvert j\rangle\!\langle j'\rvert \sum_{k}\hs_k \lvert k\rangle\!\langle k\rvert\right] \\
            &= \sum_{j,j'} \sqrt{\hp_j\hp_{j'}}\Tr\left[\rho_i\hrho_j\right] \hs_j \delta_{j,j'} \\
            &= \sum_j \hp_j \hs_j \Tr\left[\rho_i\hrho_j\right] \\
            &= \frac{\sum_j \hz_j \Tr\left[\rho_i\hrho_j\right]}{\sum_k \lvert \hz_k\rvert} \\
            & = \frac{\Tr\left[\rho_i \hcalM_y\right]}{\sum_k \lvert \hz_k\rvert}\,.
        \end{align}
        Indeed, in order to reach our goal, we need to multiply the result of the construction with the $1$-norm of the intermediate variable $\hz$.
        Thus completing the proof that PQCs with fixed structure can learn arbitrary data labelings, provided the input states are distinguishable enough.
    \end{proof}
    
\end{document}